\numberwithin{equation}{section}
\newtheorem{theorem}{Theorem}[section]
\newtheorem{lemma}[theorem]{Lemma}
\newtheorem{proposition}[theorem]{Proposition}
\theoremstyle{definition}
\newtheorem{assumption}[theorem]{Assumption}
\newcommand{\R}{\mathbb{R}}
\newcommand{\N}{\mathbb{N}}
\newcommand{\E}{\mathbb{E}}
\newcommand{\Q}{\mathbb{Q}}
\renewcommand{\P}{\mathbb{P}}
\newcommand{\ind}{\mathbb{1}}
\newcommand{\stf}{S_t^\mathcal{F}}
\newcommand{\stc}{S_t^\mathcal{C}}
\renewcommand{\leq}{\leqslant}
\renewcommand{\geq}{\geqslant}
\NewDocumentCommand{\formula}{ssom}{%
 \IfBooleanTF{#1}{%
  \IfBooleanTF{#2}{%
   \IfValueTF{#3}%
    {\begin{align}\label{#3}\begin{gathered}#4\end{gathered}\end{align}}%
    {\begin{gather}#4\end{gather}}%
  }{%
   \IfValueTF{#3}%
    {\begin{align}\label{#3}\begin{aligned}#4\end{aligned}\end{align}}%
    {\begin{gather*}#4\end{gather*}}%
  }%
 }{%
  \IfValueTF{#3}%
   {\begin{align}\label{#3}#4\end{align}}%
   {\begin{align*}#4\end{align*}}%
 }%
}
\begin{document}

\title[Design and valuation of multi-region CoCoCat bonds]{Design and valuation of multi-region CoCoCat bonds}
\author{Jacek Wszoła, Krzysztof Burnecki, Marek Teuerle, Martyna Zdeb}
\thanks{Work supported by NCN Grant No. 2022/47/B/HS4/02139}
\address{\normalfont Jacek Wszoła, Krzysztof Burnecki, Marek Teuerle, Martyna Zdeb \\ Faculty of Pure and Applied Mathematics \\ Wrocław University of Science and Technology \\ Wyspiańskiego 27 \\ 50-370 Wrocław, Poland}
\email{krzysztof.burnecki@pwr.edu.pl}
\date{\today}

\begin{abstract}
This paper introduces a novel multidimensional insurance-linked instrument: a contingent convertible bond (CoCoCat bond) whose conversion trigger is activated by predefined natural catastrophes across multiple geographical regions. We develop such a model explicitly accounting for the complex dependencies between regional catastrophe losses. Specifically, we explore scenarios ranging from complete independence to proportional loss dependencies, both with fixed and random loss amounts. Utilizing change-of-measure techniques, we derive risk-neutral pricing formulas tailored to these diverse dependence structures. By fitting our model to real-world natural catastrophe data from Property Claim Services, we demonstrate the significant impact of inter-regional dependencies on the CoCoCat bond's pricing, highlighting the importance of multidimensional risk assessment for this innovative financial instrument.
\end{abstract}

\maketitle

\section{Introduction}

Insurance-linked securities (ILS) represent a class of financial instruments designed to transfer specific insurance-related risks, predominantly those associated with natural catastrophes but increasingly encompassing other areas such as cyber, mortality, and longevity risks, from sponsoring entities to capital market investors \cite{Canabarro2000,Barrieu2010, Braun2022Cyber,Lane2024Market}. Sponsors typically include insurance and reinsurance companies seeking alternative risk capital or capacity, but also governments and corporations seeking to manage exposure to extreme events \cite{hc,zimbidis2023}. The fundamental mechanism involves the sponsor paying a premium or coupon to investors in exchange for financial protection against predefined trigger events. This protection is facilitated through a Special Purpose Vehicle (SPV), an entity created specifically for the transaction, which issues the securities (often bonds) to investors. The proceeds from the issuance are held in a collateral account, usually invested in highly secure, liquid assets. If no triggering event occurs during the security's term, investors receive their principal back along with the agreed-upon coupon payments, which typically offer a premium over risk-free rates. However, if a predefined trigger event (e.g., a hurricane of a certain intensity hitting a specific region, industry losses exceeding a threshold, or the sponsor's own losses surpassing a level) occurs, the principal, and sometimes accrued interest, is partially or fully forgiven and used to cover the sponsor's claims arising from the event \cite{Barrieu2010}.

This structure offers potential benefits to both parties. For sponsors, ILS provide access to the vast capacity of the capital markets, which significantly exceeds that of the traditional reinsurance market, allowing for the transfer of large, peak risks that might otherwise be difficult or expensive to reinsure. It diversifies their sources of risk capital, potentially reduces counterparty credit risk (as ILS are typically fully collateralized), and can offer multi-year coverage stability \cite{Canabarro2000}. For investors, ILS offer the potential for attractive yields and, crucially, returns that are largely uncorrelated with traditional financial assets like stocks and bonds, since the underlying risks (e.g., natural disasters) are generally independent of economic cycles. This low correlation makes ILS a valuable tool for portfolio diversification \cite{carayannopoulos2015diversification,Chakrabatu2017}.

While early ILS transactions often focused on single, well-understood peak perils in specific regions (e.g., US hurricane risk), the market has evolved to include more complex structures covering multiple perils and/or multiple geographic regions. Multi-peril ILS bundle coverage for different types of risks (e.g., earthquake and windstorm) within a single instrument, while multi-region ILS cover risks across different geographical areas (e.g., US and Europe, or multiple countries within a region). Examples include bonds covering US wind and earthquake jointly, North American multiperil risks, or combinations like European wind and Turkish earthquake \cite{SwissRe2024Insights, Lane2024Market}.
In \cite{burnecki2024multiPeril} a framework for pricing insurance-linked securities that are associated with multiple natural catastrophe risks was developed. As a representative case, a multi-peril catastrophe bond was designed that can be linked either to industry loss indices or to the actual losses experienced by an insurer.

The literature proposes several theoretical frameworks to address the challenges of pricing ILS in incomplete markets.  Arbitrage-Free Pricing (Risk-Neutral Framework) adapts standard asset pricing theory. One method assumes investors are risk-neutral specifically towards the non-systematic jump risk associated with catastrophe events, implying the risk-neutral probability measure for these events coincides with the real-world physical probability measure \cite{taylor,cox2000catastrophe}. Alternatively, pricing can be performed under an equivalent martingale measure derived using techniques such as the Esscher transform or the Wang transform, which explicitly incorporate a market price of the risk parameter to adjust probabilities \cite{BAUER2010139,wang2004cat}. Another approach is contingent claim analysis (CCA). This framework values ILS, particularly catastrophe bonds, as derivative securities, similar to options written on an underlying measure of catastrophe loss or event occurrence \cite{mm,nr,vaugirard2003pricing}. It is also worth mentioning utility-based frameworks. These models analyse the decision to issue or invest in ILS from the perspective of maximising expected utility \cite{LIU2021793,petra2021}. 

Complementing theoretical models, empirical and statistical approaches aim to explain observed ILS prices or forecast future prices based on data. Early empirical work focused on identifying the key determinants of ILS spreads (the premium over the risk-free rate) at issuance using linear or log-linear regression models \cite{lane2003}. Common explanatory variables include measures of risk (expected loss (EL), probability of first loss (PFL)), bond characteristics (term, size, trigger type, peril, region, sponsor identity/quality, credit rating), and market conditions (reinsurance market cycle indicators like Rate-on-Line indices, general credit market spreads like high-yield bond spreads) \cite{braun2,LaneBeckwith2021}. More recent research uses machine learning (ML) techniques, such as gradient boost machines (e.g., XGBoost), random forests and geometric deep learning approach, to model ILS prices \cite{Lane2018,GotGurWit2020, Chen2024ML,domfeh2025catnetgeometricdeeplearning}. ML models excel at capturing complex, non-linear dependencies and interaction effects among a large number of potential predictors without requiring pre-specified functional forms. 

A crucial input for many pricing models is the statistical distribution of potential losses or the probability of trigger events. Research focuses on fitting appropriate probability distributions (e.g., generalised extreme value (GEV), generalized Pareto distribution (GPD), Burr, lognormal) to historical catastrophe loss data, often using extreme value theory \cite{mm,Manathunga2023, Zimbidis2007}. For multi-peril or potentially dependent risks, copula functions are used to model the dependence structure between marginal loss distributions \cite{wei2022copulaEVT,tang2023multiEvent}. These statistical analyses often need to account for data limitations like left-truncation (only losses above a certain threshold being recorded) \cite{giuricich,burnecki2024multiPeril}.

In ILS, event arrivals are typically modelled by counting processes: homogeneous (HPP) or non-homogeneous Poisson processes (NHPPs) for single-peril frequency (with seasonality captured via time-varying intensities), as in actuarial treatments of seasonal CAT bonds and flood-hedging CAT bonds that use NHPPs with trend and seasonality \cite{mm,burjanwer11}. To allow intensity to jump and decay after shocks, and to accommodate over-dispersion relative to Poisson, many pricing frameworks use Cox (doubly stochastic Poisson) models, often with shot-noise intensities; these are standard in catastrophe derivatives and CAT-bond pricing \cite{dj}. When exponential inter-arrival times are too restrictive, renewal processes provide a flexible alternative for inter-event timing and have been used directly in CAT-bond valuation \cite{burgiu17risks,Safarveisi05052025}. 
 For perils with clustering—most notably earthquakes—self-exciting point processes such as Hawkes/ETAS better capture aftershock cascades and materially affect risk-based CAT-bond metrics \cite{mistry2024,Louloudis2024Robust}. 
 Finally, portfolio-level dependence across regions/perils is often handled with common-shock Poisson frameworks that correlate event counts while preserving marginal structures, a device widely adopted in insurance risk modelling and applicable to multi-peril ILS \cite{Lindskog_McNeil_2003}.

The potential impact of climate change adds another layer of complexity \cite{Morana2019}. Climate change may alter the frequency, intensity and geographical distribution of weather-related risks such as hurricanes, floods, and wildfires, which may make historical data less relevant for predicting future risk \cite{sigma2020}. 

Recently in the banking industry, ''contingent capita'' instruments, such as contingent convertible (CoCo) bonds, have gained the support of various academics, practitioners, economists, regulators and banks as a potential avenue to reduce the need for bailouts of institutions that are classified as `too-big-to-fail' \cite{ flannery2016stabilizing}. Contingent capital instruments are a type of debt instrument with a loss-absorbing mechanism: that is, they are automatically converted into common equity or written down when a pre-specified trigger event occurs \cite{flannery2016stabilizing}.

Given the success of contingent capital instruments, such as CoCo bonds, in \cite{bgp} the authors specified a special type of CoCo bond for insurers and reinsurers. 
Such a CoCoCat bond can be seen as a special type of CoCo bond. CoCo bonds are characterised by two important features, namely the conversion trigger and the conversion mechanism.


In this paper, we consider a CoCoCat bond which is a contingent convertible bond that has a trigger linked to the occurrence of a single or sequence of predefined natural catastrophes in different regions, and a conversion mechanism whereby the bond either (i)
converts into common equity of the issuer (therefore increasing the size of common equity
in issue), at a predefined conversion rate as specified in the bond covenant, or (ii) is written
down (both principal and coupons) by a fixed percentage which is specified in the covenant. 
We construct a model that explicitly captures the intricate dependence patterns among regional catastrophe losses. Our analysis covers a range of scenarios, from fully independent events to proportionally dependent losses, considering both fixed and random loss magnitudes. By applying change-of-measure methods, we derive risk-neutral pricing formulas that reflect these various dependency structures.


The structure of the paper is as follows. Section \ref{sec:mode} introduces the concept of CoCoCat models applied across multiple regions, examining three distinct scenarios: fully independent loss processes, independent loss amounts, and proportional loss amounts under both fixed and random settings. In Section \ref{sec:price}, we apply change-of-measure techniques to develop a valuation framework and derive risk-neutral pricing formulas tailored to these scenarios. The results are categorized into three groups based on the underlying assumptions about the loss mechanisms. Section \ref{sec:proof} presents detailed proofs of the key results from Section \ref{sec:price}. Section \ref{sec:gen} discusses potential extensions of the model to accommodate claims from three or more regions. In Section \ref{sec:gen}, we perform numerical analyses using the pricing formulas from Section \ref{sec:price} to explore how CoCoCat bond prices respond to changes in model parameters. The catastrophe loss models are calibrated using data from Property Claims Services. The paper concludes in Section \ref{sec:con} with a summary of the main findings.

\section{Considered models}
\label{sec:mode}

We lay here the groundwork for the valuation of multi-peril and multi-region CoCoCat bonds by detailing the mathematical frameworks employed. 
The section begins by establishing fundamental assumptions.

\subsection{Assumptions} 

Loosely following \cite{jaimungal2006catastrophe}, we commence under the real-world probability measure. Under any probability measure, the price of the CAT bond depends on two emerging phenomena: financial market-related risk and catastrophe-related risk. Since the catastrophe-risk variables will give rise to jumps, we need to work in an incomplete markets setting and, moreover, note that complicated changes of measure could arise. To avoid this, we make the following assumption in line with much of the previous literature on pricing catastrophe-linked financial instruments. Evidence in support of this assumption has been found by \cite{hoyt} and \cite{cummins2009convergence}, but is disputed by \cite{carayannopoulos2015diversification} and \cite{hagendorff2015impact}, see also the discussion in \cite{bgp}.

\begin{assumption}
\label{ass_1}
    Catastrophe risk variables and financial market risk variables are independent in the real world.
\end{assumption}

We now continue our analysis by considering two financial-market-related processes. For the purposes of CAT bond pricing, we assume that on this space exists a riskless bank account based on an interest rate process, which we shall adopt as the num\'eraire asset later.

We also follow the incomplete market framework of \cite{merton}. Such a framework has been used extensively in the literature when valuing derivatives with payoffs linked to natural catastrophes; see, for example, \cite{bakshi2002average}, \cite{ly}, \cite{vaugirard2003pricing}, \cite{jaimungal2006catastrophe}, \cite{ly2}, \cite{mm}, \cite{nr}, and \cite{chang2017integrated}. On the basis of its pervasiveness in the literature to date, the following assumption is made in our work.

\begin{assumption}
\label{ass_2}
    Investors are risk-neutral towards the jump risk posed by the natural catastrophe risk variables.
\end{assumption}

However, it must be borne in mind that recent empirical catastrophe bond pricing literature has shown that catastrophe bonds do not have a zero risk premium, see, for example, \cite{ppd}, \cite{braun2} and \cite{gurtler2016impact}. This effect could also extend to other catastrophe-linked ILS instruments. Against this backdrop, it is possible to infer that pricing models based on the zero risk-premium assumption may give rise to values lower than those given by pricing models which assume a non-zero risk premium. In consequence, the usage of these pricing models may require additional margins added to the calculated value, or margins added to the parameters of the distributions associated with the jump process, all at the discretion of the issuer. 

Despite this, we remain true to Assumption \ref{ass_2} in our work for two reasons. First, it aligns with actuarial pricing techniques, which, as noted by \cite{braun}, dominate in practice. Second, and more importantly, it can be adjusted to the range of underlying state variables in the model that are not investment assets and, therefore, not tradable. in the model. Hence, we can use real-world data to price, and this is useful given the scarcity of (and difficulty of obtaining) pricing data for many catastrophe-linked ILS.


In this paper, we consider three specialised cases of multi-region loss processes, which are crucial for subsequent pricing formulae. These models provide the foundation for understanding how catastrophic events in multiple regions and their financial implications are structured and analysed within the paper.  
First, let us introduce a general 2D pricing framework.

\subsection{General 2D pricing framework}
The general construction builds on the 1D framework introduced in \cite{bgp}. The 2D framework describes both the financial market risk and the catastrophe risk variables. The former follow the classical Black-Scholes dynamics with appropriately selected stochastic interest rate process, while in the latter the two-dimensional aggregate loss process which can be modelled by various compound Poisson processes.

First, we establish some notation on the parameters of the CoCoCat bond, which covers the risk in two regions. Let:
\begin{itemize}
    \item $T>0$ be the term of the CoCoCat bond;
    \item $Z>0$ be the principal invested;
    \item $0<t_1< t_2< \ldots< t_N = T$ be the coupon paying dates with constant yearly time period between the dates: $\Delta = t_{i}-t_{i-1}$ for $i = 2,3,\ldots,N$;
    \item $c \geq 0$ be the constant spread (i.e. the catastrophe risk premium);
    \item $\zeta \in [0,1]$ be the conversion fraction;
    \item $D_1,D_2 > 0$ be the threshold levels for the trigger corresponding to two selected regions;
    \item $K_P >0$ be the conversion price.
\end{itemize}

Moreover, let $\{S_t: t\geq0\}$ denote the share price process of the issuing firm. It is natural to express it
in terms of financial and catastrophic risk variables. 
Following the idea of Assumption \ref{ass_1} we assume that it can be decomposed as
\formula[eq:ststfstc]{
S_t = \stf \stc,
}
where $\{S_t^\mathcal{F}: t \geq 0\}$ and $\{S_t^\mathcal{C}: t\geq 0\}$ are two independent processes driven by financial market risk and catastrophic risk, respectively. Additionally, we set $S_0$ to be the price of the issuing firm at time $t=0$.

The part of the model corresponding to the financial world is very similar to the 1D case in \cite{bgp}. We use the standard Black--Scholes dynamics and for the structure of interest rate $\{r_t: t \geq 0\}$ we select Longstaff's model, namely under the real-word measure $\mathbb{P}$
\formula[eq:drt]{
dr_t &= \hat{\theta}_r(\hat{m}_r-\sqrt{r_t})\,dt +\hat{\sigma}_r \sqrt{r_t} d\hat{W}_t^1, \\
d \stf &= \mu_S \stf dt + \sigma_S \stf d\hat{W}_t^2.
}
The processes $\hat{W}_t^1$ and $\hat{W}_t^2$ are Brownian motions, which satisfy
\formula[eq:rho]{
d \langle \hat{W}_t^1, \hat{W}_t^2 \rangle = \rho \, dt
}
for some $\rho \in [-1,1]$. 
These assumptions imply that
\formula[stf-formula]{
\stf = S_0 \exp\left( \sigma_S \hat{W}_t^2 + \int_0^t (r_u - \tfrac{1}{2} \sigma_S^2 ) du \right).
}

Note that instead of Longstaff's model, one can choose any interest rate model invariant to the Girsanov transformation with a constant kernel, such as Vasicek's model or Hull-White's model. Moreover, all of these three models admit a closed-form solution to the price of a zero-coupon bond paying 1 at maturity. 

Now, we characterise the part of the model corresponding to the catastrophe-risk variables. We define
\formula[stc-formula]{
\stc = \exp \left( -\alpha L_t^1 - \beta L_t^2 + \alpha \kappa_1  \int_0^t \lambda_u^1 \, du + \beta \kappa_2 \int_0^t \lambda_u^2 \, du \right),
}
where $\alpha, \beta, \kappa_1, \kappa_2 >0$ and $L_t^1$, $L_t^2$ denote the aggregate loss processes given by
\formula{
& L_t^1 = \sum_{k=1}^{N_t^1} X_k^1, & L_t^2 = \sum_{k=1}^{N_t^2} X_k^2.
}
Here, $N_t^1$ and $N_t^2$ are non-homogeneous Poisson processes with cumulative deterministic intensities
\formula{
& \Lambda_t^1 = \int_0^t \lambda_u^1 \, du, & \Lambda_t^2 = \int_0^t \lambda_u^2 \, du,
}
respectively, for some non-negative intensity functions $\lambda_u^1$ and $\lambda_u^2$.

The loss amounts described by non-negative continuous random variables $X_1^1, X_2^1, X_3^1 \ldots$ and $X_1^2, X_2^2, X_3^2, \ldots$ are pairwise independent and identically distributed with distribution functions $F_X^1$, $F_X^2$ and densities $f_X^1$, $f_X^2$, respectively. By pairwise independent, we mean here that for any $i=1,2$ and $k,l \in \N$, $k \neq l$ random variables $X_k^i$ and $X_j^i$ are independent. In particular, note that this definition allows for the dependency of variables $X_k^1$ and $X_l^2$ for some $k,l \in \N$. This is a natural assumption regarding the fact that these variables are identified with loss amounts in two different regions.

The aggregate loss processes $L_t^1$ and $L_t^2$ correspond to the behaviour of CoCoCat bond's $i$th underlying index. The constants $\alpha, \beta$ represent the effect of catastrophic losses on the logarithm of the share price. The coefficients $\kappa_1$ and $\kappa_2$ correspond to Poisson processes $N_t^1$ and $N_t^2$, respectively. Specifically, if $N_t^1$ and $N_t^2$ are the same process, which we will denote by $N_t$, then $\kappa_1 = \kappa_2$ and we denote this common value as $\kappa$.


We consider three special cases of the aggregate loss process and, for these cases, we will provide appropriate pricing formulae. We now give precise mathematical assumptions for these cases and provide their interpretation.

\subsection{Independent loss processes (ILP)}\label{ilp-definition}

Suppose that we observe losses from two different regions with two different frequencies. Mathematically speaking, we assume that the variables~$N_t^1, N_t^2$, $X_k^1, X_k^2$ are (pairwise) independent for any $k \in \N$. Clearly, the aggregated loss processes $L_t^1$ and $L_t^2$ are also independent. Therefore, we can rewrite \eqref{eq:ststfstc} as
\formula[ilp-stc-product]{
\stc = S_t^{\mathcal{C},1} S_t^{\mathcal{C},2},
}
where
\formula{
& S_t^{\mathcal{C},1} = \exp\left(-\alpha \sum_{k=1}^{N_t^1} X_k^1 + \alpha \kappa_1 \int_0^t \lambda_u^1 \, du\right), &   S_t^{\mathcal{C},2} = \exp\left(-\beta \sum_{k=1}^{N_t^2} X_k^2 + \beta \kappa_2 \int_0^t \lambda_u^2 \, du\right).
}
Note that $\smash{S_t^{\mathcal{C},1}}$ and $\smash{S_t^{\mathcal{C},2}}$ are independent single-region processes for the 1D model (see equation (7) in \cite{bgp}). Formula \eqref{ilp-stc-product} provides a very convenient way to represent the catastrophic share price process, allowing the use of 1D methods.

Another way of thinking about the ILP assumption is to treat the two distinct regions as one and simply reduce the problem to the 1D case. One can define the merged aggregate loss process: 
\formula[ilp-lt-merged]{
L_t = \frac{\alpha}{\alpha+ \beta} L_t^1 + \frac{\beta}{\alpha+\beta} L_t^2.
}

It is easy to see that $L_t$ is a compound Poisson process itself. Its frequency component reads $N_t^1 + N_t^2$ with time-dependent cumulative intensity $\Lambda_t = \Lambda_t^1 + \Lambda_t^2$. Moreover, $L_t$ has random loss amounts $X_1, X_2, X_3, \ldots$ which are i.i.d. random variables following the mixture of distributions $F_X^1$ and $F_X^2$. However, it should be noted that, in general, the distribution of $X_k$ may be time-dependent, unlike the distributions of $X_k^1$ and $X_k^2$. This problem does not arise when $N_t^1$ and $N_t^2$ are homogeneous Poisson processes.

In this notation, equation \eqref{stc-formula} reads
\formula{
\stc &= \exp \left( -(\alpha+\beta) L_t + \alpha \kappa_1 \int_0^t \lambda_s^1 \, ds + \beta \kappa_2 \int_0^t \lambda_s^2 \, ds \right).}

Although this way of seeing the ILP assumption can seem very natural, it will not be very helpful in pricing, mainly because of non-matching integral components. However, it can be considered an inspiration to use this trick for the two remaining cases.


\subsection{Independent loss amounts (ILA)}\label{ila-definition}

Suppose that losses from two distinct regions occur at the same time. It means that the loss amount variables $X_k^1$ and $X_k^2$ are independent in the sense that for any $k,l \in \N$, $i,j = 1,2$ such that $(i,k) \neq (j,l)$ variables $X_k^i$ and $X_l^j$ are independent. Moreover, the processes $N_t^1$ and $N_t^2$ are the same, so we denote both as $N_t$. The cumulative intensity of $N_t$ is denoted as $\Lambda_t$.

We can again define the aggregate loss process merged in the same vein as in \eqref{ilp-lt-merged}, but since for ILA loss frequency is the same, it simplifies to
\formula[ila-lt-merged]{
L_t = \frac{\alpha}{\alpha+\beta} L_t^1 + \frac{\beta}{\alpha + \beta} L_t^2 = \sum_{k=1}^{N_t} \left( \frac{\alpha}{\alpha + \beta} X_k^1 + \frac{\beta}{\alpha + \beta} X_k^2 \right).
}

One can easily identify the distribution of the summands $X_k$, as they are the sums of two independent random variables.

Since under the ILA assumption the losses occur simultaneously, we have $\kappa_1 = \kappa_2=\kappa$. Equation $\eqref{stc-formula}$ can be rewritten again in terms of the merged aggregate loss process $L_t$ as follows:
\formula{
\stc &= \exp \left( -(\alpha+\beta)L_t + \kappa(\alpha+\beta) \int_0^t \lambda_s \, ds \right).
}

Thus, it can be seen that in this approach, the process $\stc$ can be reduced to its single-region analogue.


\subsection{Proportional loss amounts (PLA)}\label{pla-definition}
Now, let us consider a case where not only the losses occur at the same time but the losses $X_k$ themselves are split proportionally among the regions. Similarly to the ILA assumption, we denote the common counting Poisson process by $N_t$ and its cumulative intensity by $\Lambda_t$. We consider two types of proportional splits of losses:
\begin{enumerate}[label = (\alph*)]
    \item\label{a} Constant proportional loss amounts (cPLA): for fixed and deterministic proportion coefficient $p \in (0,1)$, the loss amounts satisfy:
    \formula{
    & X_k^1 = p X_k, & X_k^2 = (1-p) X_k, &
    }
    where $X_k$ is a sequence of i.i.d. random variables with distribution function $F_X$ and density $f_X$.  
    \item\label{b} Random proportional loss amounts (rPLA): for random proportion coefficient $P \in (0,1)$, loss amounts satisfy:
    \formula{
    & X_k^1 = P X_k, & X_k^2 = (1-P) X_k. &
    }
    We additionally assume that $P$ is independent of $X_k$ and $P$ and has a distribution function $F_P$. 
\end{enumerate}

Note that for the rPLA case we do not assume that $P$ is a continuous random variable. This allows us to recover the cPLA case simply by defining $P = p$ with probability 1 for $p \in (0,1)$. For that reason, throughout the rest of the paper we mainly focus on the more general case (random $P$).

For PLA, it is more convenient to slightly modify the previous definition of the merged aggregate loss process and define it as
\formula[pla-lt-merged]{
\alpha L_t^1 + \beta L_t^2  = (\alpha P + \beta(1-P)) \sum_{k=1}^{N_t} X_k = Q L_t,
}
where $Q = \alpha P + \beta (1-P)$ and $L_t = \sum_{k=1}^{N_t} X_k$. Here, $L_t$ refers to the standard aggregate loss process. Therefore, the catastrophic share price process can be written as
\formula{
\stc = \exp\left( - Q L_t + \kappa (\alpha + \beta) \int_0^t \lambda_u \, du \right).
}

Note that for certain special cases, some simplifications are possible. For $\alpha = \beta$ we have $Q=\alpha$ and hence $\stc$ reduces to the single-region case, with loss amounts given by $\alpha X_k$ and effect of losses ($\alpha+\beta$) equal to $2\alpha$. Another example is when $P=\frac{1}{2}$ with probability 1 (or $p=1/2$). Then we obtain another single-region process with the losses $(\alpha+\beta)X_k/2$ and effect of losses $\alpha+\beta$. 

Another possible assumption to study would be a further generalisation of the rPLA case and considering floating proportional coefficients $P_k$, for which the losses would be $X_k^1 = P_k X_k$ and $X_k^2 = (1-P_k) X_k$. Despite its realistic nature, this assumption poses many difficulties, such as finding the distribution of the trigger time. However, it opens the door to further development of multi-region framework.


\section{Risk-neutral pricing: analytic formulae}
\label{sec:price}

Under the risk-neutral measure $\mathbb{Q}$ the catastrophe-risk and financial market risk variables are captured by the following system
of equations.

\begin{proposition}
The multi-region model is defined by the following system of equations:
\formula[]{
\label{rt} dr_t &= \theta_r(m_r-\sqrt{r_t})\,dt +\sigma_r \sqrt{r_t} dW_t^1, \\
\label{st} S_t &= \stf \stc, \\
\label{stc} \stc &= \exp \left( -\alpha L_t^1 - \beta L_t^2 + \alpha \kappa_1 \int_0^t \lambda_u^1 \, du + \beta \kappa_2 \int_0^t \lambda_u^2 \, du \right), \\
\label{stf} d \stf &= \mu_S \stf dt + \sigma_S \stf dW_t^2, \\
\label{w1w2} d \langle W_t^1, W_t^2 \rangle &= \rho dt, \\
\label{lts} (L_t^1, L_t^2) &= \left(\sum_{k=1}^{N_t^1} X_k^1, \, \sum_{k=1}^{N_t^2} X_k^2,\right)
}
where $\theta_r$ and $m_r$ are the risk-neutral parameters for the interest rate process and
 $\tilde{W}^1_t$ and $\tilde{W}^2_t$ are two Brownian motions under the measure $\mathbb{Q}_F$ as specified in \cite{bgp}.
\end{proposition}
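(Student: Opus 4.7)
The plan is to obtain the stated system by performing a change of measure from the real-world measure $\mathbb{P}$ to a risk-neutral measure $\mathbb{Q}$, exploiting both Assumption \ref{ass_1} and Assumption \ref{ass_2} to reduce the construction to a purely financial Girsanov transform. First I would observe that, by Assumption \ref{ass_1}, the filtration generated by the catastrophe variables $(N_t^1, N_t^2, X_k^1, X_k^2)$ is independent of the filtration generated by the Brownian motions $(\hat W^1, \hat W^2)$ driving the financial sector. This means any candidate Radon--Nikodym density of the form $d\mathbb{Q}/d\mathbb{P} = \mathcal{E}(\,\cdot\,)_T$ built only from the financial Brownian drivers leaves the joint law of all catastrophe variables invariant.

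Next I would write the density explicitly. Since $r_t$ and $\stf$ are driven by $\hat W^1$ and $\hat W^2$ with instantaneous correlation $\rho$, standard Girsanov with constant market-prices-of-risk $(\gamma_1,\gamma_2)$ produces two $\mathbb{Q}_F$-Brownian motions $\tilde W^1_t$, $\tilde W^2_t$ with the same correlation $\rho$. The parameters $\gamma_1,\gamma_2$ are fixed so that the discounted $\stf$ becomes a local martingale under $\mathbb{Q}_F$, which yields the drift $\mu_S \to r_t$ after substitution; this is the step that gives \eqref{stf}. Because the kernel is constant, the Longstaff specification is form-invariant: only the mean-reversion parameters shift from $(\hat\theta_r,\hat m_r)$ to $(\theta_r,m_r)$, giving \eqref{rt}. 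Equation \eqref{w1w2} is inherited unchanged from \eqref{eq:rho} since Girsanov preserves the quadratic covariation.

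Having transformed the financial block, I would turn to the catastrophe block. By Assumption \ref{ass_2}, investors are risk-neutral toward jump risk, so the real-world law of $(N_t^1,N_t^2,X_k^1,X_k^2)$ coincides with its risk-neutral law. Combined with the independence from the financial Brownian drivers, the density above does not reweight the catastrophe variables, so \eqref{stc}, \eqref{lts}, and the factorization $S_t = \stf \stc$ in \eqref{st} carry over from the real-world definitions \eqref{eq:ststfstc} and \eqref{stc-formula} verbatim. Finally I would set $\mathbb{Q} := \mathbb{Q}_F \otimes \mathbb{P}_C$, where $\mathbb{P}_C$ is the restriction of $\mathbb{P}$ to the catastrophe $\sigma$-algebra, and verify that all six equations hold simultaneously under this product measure.

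The main obstacle is conceptual rather than computational: one must justify that a single product measure $\mathbb{Q}$ captures both the financial risk adjustment and the unchanged catastrophe law, and that Assumptions \ref{ass_1}--\ref{ass_2} are exactly what make this factorization consistent (in particular, that the Brownian quadratic variation $\rho\,dt$ is preserved, and that no cross-terms between $\mathcal{F}$ and $\mathcal{C}$ arise in the density). Once this is established, the remaining steps are a direct application of Girsanov for correlated Brownian motions together with the invariance of Longstaff's specification under constant-kernel changes of measure, as alluded to after \eqref{stf-formula}.
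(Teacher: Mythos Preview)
Your proposal is correct and follows essentially the same route the paper takes: the paper does not give a separate proof of this proposition at all, treating it as the specification of the model under $\mathbb{Q}$ and deferring the justification to the model-specific lemmas in Section~\ref{sec:proof} (e.g.\ Lemma~\ref{ilp-kappa-lemma}), whose proofs in turn point to Theorem~3 of \cite{bgp} for the financial Girsanov step and then verify the $\stc$-martingale condition to fix $\kappa_1,\kappa_2$. Your outline---Girsanov with a constant kernel on the correlated Brownian pair $(\hat W^1,\hat W^2)$, form-invariance of Longstaff's dynamics, preservation of the catastrophe law via Assumptions~\ref{ass_1}--\ref{ass_2}, and the product structure $\mathbb{Q}=\mathbb{Q}_F\otimes\mathbb{P}_C$---is exactly that argument spelled out, so nothing is missing.
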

We now go into the details of pricing.

\subsection{CoCoCat bond's mechanism. General pricing formula}

Before we present the main results of the paper, we specify some additional notation. Let $R(t, t_{i-1}, t_{i})$ be the forward risk-free (like LIBOR) rate at time $t$ for the interval $[t_{i-1}, t_i]$. Since $\Delta = t_i - t_{i-1}$ is constant, the risk-free process $R(t,t,t+\Delta)$ at time $t$ is denoted as $\{R_t: t \geq 0\}$. We also define the discounted riskless bank account associated with the process $\{r_t: t \geq 0\}$ as
\formula[b0t]{
B(0,t) = \exp\left(-\int_0^t r_s \, ds\right).
}
Hence, risk-neutral price at time $t$ of a zero-coupon bond paying one unit at maturity $T$ ($T>t$) can be calculated as the conditional expectation given by $\mathbb{E}^\mathbb{Q}[B(0,T)| \mathcal{F}_t]$ for $t \in [0,T]$. However, in pricing, we will focus on $t=0$. Note that in this case, the above expectation is no longer conditional since $r_0$ is deterministic. We denote the price of a zero-coupon unit bond with maturity $T$ as
\formula[p-bond]{
P(r_0, T, \theta_r, m_r, \sigma_r) = \mathbb{E}^\mathbb{Q} B(0,T)
}
when the interest rate $r_t$ is given by the model \eqref{rt} with parameters $\theta_r, \, m_r, \, \sigma_r$ and initial value $r_0$. The analytical formulae for \eqref{p-bond} are well-known in the literature. These famous results are recalled below.

\begin{proposition}\label{p-formulas}
    If $r_t$ follows Longstaff's model, we have
    \formula{
    P(r_0, T, \theta_r, m_r, \sigma_r) = A(T) \exp( r_0 B(T) + \sqrt{r_0} C(T)),
    }
    where
    \formula{
    A(T) &= \left( \frac{2}{1+e^{\psi T}}\right)^{1/2} \exp\left( c_1 + c_2 T + \frac{c_3}{1+e^{\psi T}} \right), \\
    B(T) &= - \frac{\psi}{\sigma_r^2} + \frac{2\psi}{\sigma_r^2 ( 1+e^{\psi T})}, \\
    C(T) &= \frac{2 \theta_r (1-e^{\psi T/2})}{\sigma_r^2(1+ e^{\psi T})}
    }
    and
    \formula{
    & \psi = \sqrt{2} \sigma_r, & c_1 = \frac{\theta_r^2}{\psi \sigma_r^2}, & & c_2 = \frac{\psi}{4} - \frac{\theta_r^2}{\psi^2}, & & c_3 = - \frac{4 \theta_r^2}{\psi^3}. &
    }
\end{proposition}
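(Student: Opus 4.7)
The plan is to treat this as a classical Feynman--Kac problem. Writing
$$u(t,r) = \mathbb{E}^{\mathbb{Q}}\!\left[\exp\!\left(-\int_t^T r_s\,ds\right)\Big|\, r_t = r\right],$$
we have $P(r_0,T,\theta_r,m_r,\sigma_r) = u(0,r_0)$. Standard Feynman--Kac applied to the SDE \eqref{rt} shows that $u$ solves the terminal-value PDE
$$\partial_t u + \theta_r(m_r - \sqrt{r})\,\partial_r u + \tfrac{1}{2}\sigma_r^2 r\,\partial_r^2 u - r\,u = 0, \qquad u(T,r) = 1.$$

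First I would insert the ansatz $u(t,r) = A(T-t)\exp\bigl(r B(T-t) + \sqrt{r}\,C(T-t)\bigr)$, motivated by the observation that Longstaff's drift contains both $r$ (through the $\sqrt{r}$ term after squaring in the quadratic) and $\sqrt{r}$, so the usual affine-in-$r$ exponent of CIR/Vasicek must be enriched by a $\sqrt{r}$ term. Computing the partial derivatives and substituting, one collects coefficients of the linearly independent functions $r$, $\sqrt{r}$, $1$, and $1/\sqrt{r}$ of $r$. Each coefficient must vanish identically in $t$, yielding a system of ODEs for $A$, $B$, $C$ with initial conditions $A(0)=1$, $B(0)=0$, $C(0)=0$. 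In particular, the coefficient of $r$ produces a Riccati equation for $B$ (with constant coefficients once the $\sqrt{r}$ contributions are separated), the coefficient of $\sqrt{r}$ produces a first-order linear equation for $C$ driven by $B$, and the coefficient of $1$ gives $A$ by direct quadrature.

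Second I would solve these ODEs explicitly. The Riccati equation for $B$ has constant coefficients, so the substitution $B = -(2/\sigma_r^2)\,f'/f$ linearises it to a second-order linear ODE whose characteristic roots are $\pm\psi/2$ with $\psi = \sqrt{2}\,\sigma_r$; enforcing $B(0)=0$ fixes the combination of exponentials and produces the stated formula for $B(T)$. The linear ODE for $C$ can then be integrated directly, producing the $(1 - e^{\psi T/2})/(1 + e^{\psi T})$ factor. Finally, integrating the $\log A$ equation and collecting constants produces the $c_1, c_2, c_3$ terms and the prefactor $\bigl(2/(1+e^{\psi T})\bigr)^{1/2}$.

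The main obstacle is purely algebraic: cleanly separating the PDE coefficients by powers of $\sqrt{r}$ without sign errors, and then matching the explicit solutions of the ODEs to the closed forms for $A, B, C$ given in the statement. Since this is the classical result of Longstaff, in practice I would simply verify the formula by plugging the proposed $A, B, C$ back into the PDE and checking that each coefficient vanishes, which reduces the proof to mechanical differentiation; the ansatz-and-solve route above would only be used to motivate the form.
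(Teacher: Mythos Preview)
The paper does not actually prove this proposition: it is introduced with the sentence ``The analytical formulae for \eqref{p-bond} are well-known in the literature. These famous results are recalled below,'' and no argument is given afterwards. Your Feynman--Kac outline is precisely the standard derivation (essentially Longstaff's original argument), so there is nothing to compare against; your proposal is correct and would constitute a proof where the paper simply cites the result.
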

    
Let us recall the general CoCoCat bond mechanism. With the principal $K$ invested in the bond, the coupons with rate $c+\operatorname{LIBOR}$ are paid at times $t_1 < t_2< \ldots < t_N=T$ until the maturity date $T$ or upon the occurrence of a trigger, whichever happens first. If the trigger does not occur before maturity, all the money is returned to the investor. Otherwise, immediately upon the time of trigger, the conversion mechanism is activated and $\zeta K$ is converted to common equity.

This construction is very similar to the 1D case presented in \cite{bgp}. The major difference is that in our case two catastrophic indices are considered and the trigger is activated when $L_1 \geq D_1$ or $L_2 \geq D_2$. In other words, we define the trigger time as
\formula{\tau = \min\{\tau_1, \tau_2\},}
where
\formula{
& \tau_1 = \inf\{t \geq 0: L_t^1 \geq D_1\}, & \tau_2 = \inf\{t \geq 0: L_t^2 \geq D_2\}
}
are the times the aggregate loss processes $L_t^i$ first exceed the values $D_i$ for $i=1,2$. This definition of trigger time, despite being very natural, constitutes the main difficulty in valuation multi-region CoCoCat bonds, even though the catastrophic stock price process can often be reduced to a single-region case, as shown in previous sections.

By the mechanism of the CoCoCat bond, the following general pricing formula holds.
\begin{lemma}\label{is-lemma}
    The issue-date risk-neutral price of a multi-region  CoCoCat bond is
    \formula{
    V_0 = \mathbb{E}^\mathbb{Q}[I_1 + I_2 + I_3],
    }
    where $\Q$ is the risk-neutral measure and
    \formula{
    & I_1 = \sum_{i=1}^N (R_{t_{i-1}}+c) \Delta Z \ind_{\tau > t_i} B(0, t_i), & &
    I_2 = \frac{\zeta Z}{K_P} S_\tau \ind_{\tau \leq T} B(0, \tau), &
    & I_3 = Z \ind_{\tau > T} B(0, T). &
    }
\end{lemma}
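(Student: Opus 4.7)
The proof is essentially a cash-flow decomposition under the risk-neutral measure. My plan is to enumerate every cash flow accruing to the bondholder, express its discounted time-0 value by multiplying by $B(0,t)$, and then invoke the fundamental risk-neutral valuation identity to identify $V_0$ with the $\Q$-expectation of the sum of those discounted flows.

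First, I would list the three mutually exclusive classes of cash flows implied by the CoCoCat mechanism. (a) At each coupon date $t_i$, the holder receives $(R_{t_{i-1}}+c)\Delta Z$ \emph{provided} the trigger has not yet fired; this event is exactly $\{\tau > t_i\}$, so the $i$th coupon's time-0 discounted contribution is $(R_{t_{i-1}}+c)\Delta Z\,\ind_{\tau>t_i}B(0,t_i)$, and summing over $i=1,\ldots,N$ yields $I_1$. (b) If the trigger occurs at some time $\tau\leq T$, then an amount $\zeta Z$ of principal is converted at conversion price $K_P$ into $\zeta Z/K_P$ shares of the issuer, with market value $(\zeta Z/K_P)S_\tau$ paid at $\tau$; discounting by $B(0,\tau)$ and restricting to $\{\tau\leq T\}$ gives $I_2$. (c) If no trigger has occurred by $T$, the principal $Z$ is returned at maturity, contributing $I_3=Z\,\ind_{\tau>T}B(0,T)$. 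These three cases are disjoint once broken down date-by-date, and together they exhaust every possible payment to the investor.

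Next, I would apply linearity of expectation and the risk-neutral pricing rule: for any integrable $\mathcal{F}_t$-measurable payoff $H_t$ paid at time $t$, its issue-date value equals $\mathbb{E}^{\Q}[H_t B(0,t)]$. Applied to each of the three cash-flow groups and summed, this yields $V_0=\mathbb{E}^{\Q}[I_1+I_2+I_3]$.

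The only genuine subtlety—and what I would regard as the main obstacle—is verifying that the enumeration above really is exhaustive and non-overlapping, and that $\tau$ is a $(\mathcal{F}_t)$-stopping time so that $S_\tau$, $B(0,\tau)$, and the indicator $\ind_{\tau\leq T}$ are all well defined and integrable. Since $\tau=\min(\tau_1,\tau_2)$ is the first-passage time of a càdlàg compound-Poisson process above a deterministic level, it is a stopping time; the integrability of $S_\tau B(0,\tau)$ follows from the boundedness of $\stc$ on $[0,T]$ together with standard estimates for $\stf$ under $\Q$ in the Longstaff/Black--Scholes framework set up in the preceding proposition. Once these measurability and integrability checks are in place, the lemma reduces to bookkeeping.
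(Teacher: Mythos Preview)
Your proposal is correct and follows the same approach as the paper: the paper simply states that the lemma holds ``by the mechanism of the CoCoCat bond'' and gives no further argument, so your cash-flow enumeration under the risk-neutral measure is exactly the detailed justification the paper leaves implicit. The additional care you take with the stopping-time and integrability checks goes beyond what the paper records but is entirely in the spirit of their one-line appeal to the bond's payoff structure.
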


In this paper, we consider exponential conversion functions, that is $K_P = S_\tau^\nu$, where $\nu \in [0,1]$. In particular, if $\nu = 0$, then the conversion amount does not depend on the share price at the trigger moment, as it is constant. On the other hand, if $\nu=1$, the conversion amount is equal to the share price at the trigger moment and $I_2$ simplifies significantly.

Given the appropriate martingale measure $\mathbb{Q}$, the above expectations can be calculated by repeating the reasoning presented in \cite{bgp} (see Sections 4.1-4.3 therein). For random variables $I_1, I_3$, results are similar to those from the 1D case, that is
\formula*[ei1]{
    \E I_1 &= Z \Delta (R_0 + c) P(r_0, t_1, \theta_r, m_r, \sigma_r) \Q(\tau > t_1) \\
    &+ Z\sum_{i=2}^N \Q(\tau > t_i) (P(r_0, t_{i-1}, \theta_r, m_r, \sigma_r) + (1-c\Delta) P(r_0, t_i, \theta_r, m_r, \sigma_r))
}
and
\formula[ei3]{
    \E I_3 = Z P(r_0, T, \theta_r, m_r, \sigma_r) \Q(\tau > T).
}
The only difference in the above formulas with respect to the 1D case is the distribution of $\tau$ which we will study separately for each model. 

Evaluation of the remaining expectation, $\E^{\Q} I_2$, requires more advanced tools leading to more complicated analytical formulas but still easy to use. This will be exploited now in detail.

\subsection{Main results}

In this section we present the main results, namely the analytic formulae for risk-neutral price of multi-region CoCoCat bonds. We split the results into three groups, based on the assumptions on the loss process (ILP, ILA, PLA).

For any appropriate function $f: (0, \infty) \to \R$, we denote its Laplace transform by
\formula{
(\mathcal{L}f)(z) = \int_0^\infty e^{-xz} f(x) \, dx.
}
The $n$th convolution power of the function $f$ is denoted by $f^{n*}$ for $n=1,2,3\ldots$ For convenience, we assume that $f^{0*}$ is identically equal to one.

\begin{theorem}\label{thm-main}
    Risk-neutral price of CoCoCat bond is equal to
    \formula{
    \E^\Q I_1 + \E^\Q I_2 + \E^\Q I_3.
    }
    Here $\E^\Q I_1$ is given by \eqref{ei1}, $\E^\Q I_3$ is given by \eqref{ei3} and
    \begin{enumerate}[label = (\alph*)]
        \item for ILP,
        \formula{
           \E^\Q I_2 = \xi Z S_0^{1-\nu} \int_0^T \exp\left( -\tfrac{1}{2}\nu(1-\nu)^2 \sigma_S^2 t \right) \Phi(t) P(r_0, t, \bar{\theta}_r, \bar{m}_r, \bar{\sigma}_r) \, F_\tau^\nu(dt),
        }
        where
        \formula{
        \Phi(t) &= \exp \bigg( -\Lambda_t^1 (1-(\mathscr{L}f_X^1)(\alpha(1-\nu)) - \Lambda_t^2 (1-(\mathscr{L}f_X^2)(\beta(1-\nu)) \\
        &+ (1-\nu) \left( \Lambda_t^1(1-(\mathcal{L}f_X^1)(\alpha)) + \Lambda_t^2 (1-(\mathcal{L}f_X^2)(\beta))\right) \bigg)
        }
        and $F_\tau^\nu$ is described by Proposition \ref{ilp-tau} with parameters $\Lambda_t^{\nu,1}$, $\Lambda_t^{\nu,2}$, $F_X^{\nu,1}$, $F_X^{\nu,2}$ described by Proposition \ref{ilp-lt-distribution};
        
        \item for ILA, 
        \formula{
           \E^\Q I_2 = \xi Z S_0^{1-\nu} \int_0^T \exp\left( -\tfrac{1}{2}\nu(1-\nu)^2 \sigma_S^2 t \right) \Phi(t) P(r_0, t, \bar{\theta}_r, \bar{m}_r, \bar{\sigma}_r) \, F_\tau^\nu(dt),
        }
        where
        \formula{
        \Phi(t) &= \exp \bigg( -\Lambda_t(1- (\mathcal{L}f_X^1)(\alpha (1-\nu)) (\mathcal{L}f_X^2)(\beta (1-\nu))) \\
        &+ (1-\nu) \Lambda_t (1- (\mathscr{L}f_X^1)(\alpha) (\mathscr{L}f_X^2)(\beta)) \bigg)
        }
        and $F_\tau^\nu$ is described by Proposition \ref{ila-tau} with parameters $\Lambda_t^{\nu}$, $F_X^{\nu,1}$, $F_X^{\nu,2}$ described by Proposition \ref{ila-lt-distribution};
        
        \item for PLA,
        \formula{
           \E^\Q I_2 &= \xi Z S_0^{1-\nu} \int_0^1 \bigg( \int_0^T \exp\left( -\tfrac{1}{2}\nu(1-\nu)^2 \sigma_S^2 t \right) \Phi(t,p) \\
           &\times P(r_0, t, \bar{\theta}_r, \bar{m}_r, \bar{\sigma}_r) \, F_\tau^{\nu|p}(dt) \bigg) F_P(dp),
        }
        where
        \formula{
        \Phi(t, p) &= \exp \bigg(-\Lambda_t (1- (\mathcal{L}f_X)((1-\nu)(\alpha p + \beta(1-p))) \\ &+ (1-\nu) \Lambda_t \left(1- \E^{\P_C} \left[ (\mathcal{L}f_X)(\alpha P + \beta(1-P))\right] \right) \bigg).
        }
        and
        \formula{
        F_\tau^{\nu|p}(t) = 1- \sum_{n=0}^\infty \frac{(\Lambda_t^\nu)^n}{n!} \exp(-\Lambda_t^\nu) (F_X^\nu)^{n*}(D_p),
        }
        where $D_p = \min\{D_1/p, D_2/(1-p)\}$ and $\Lambda_t^\nu$, $F_X^\nu$ are described by Proposition~\ref{pla-lt-distribution}.
    \end{enumerate}
    Furthermore, in all three cases we have
    \formula{
    & \bar{\theta}_r = \sqrt{\nu} (\theta_r - \sigma_r \sigma_S 
    \rho (1-\nu)), & &\bar{m}_r = \nu m_r \theta_r/\bar{\theta}_r, & \bar{\sigma}_r = \sqrt{\nu} \sigma_r.
    }
\end{theorem}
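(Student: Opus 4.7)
The formulas for $\E^\Q I_1$ and $\E^\Q I_3$ are obtained by adapting the single-region calculations of \cite[Sec.~4.1, 4.3]{bgp}: on $\{\tau > t_i\}$ the indicator is independent of the discount factor $B(0,t_i)$ by Assumption~\ref{ass_1}, so the expectation factors into $\Q(\tau > t_i)$ times $P(r_0, t_i, \theta_r, m_r, \sigma_r)$ via \eqref{p-bond}. The only novelty of the multi-region setting is the joint law of $\tau = \min(\tau_1, \tau_2)$, which is handled separately in each regime via Propositions~\ref{ilp-tau}, \ref{ila-tau} and the explicit Poisson series stated for PLA. The substance of the theorem therefore lies entirely in the computation of $\E^\Q I_2 = \zeta Z\, \E^\Q[S_\tau^{1-\nu} B(0,\tau) \ind_{\tau \leq T}]$, which I would approach by a two-step change of measure --- one per risk factor --- enabled by the decomposition $S_\tau^{1-\nu} = (\stf_\tau)^{1-\nu}(\stc_\tau)^{1-\nu}$ from \eqref{eq:ststfstc} and the independence supplied by Assumption~\ref{ass_1}.

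On the financial side, substituting \eqref{stf-formula} I would regroup the terms in $(\stf_t)^{1-\nu} B(0,t)$ so that the stochastic exponent $(1-\nu)\sigma_S W_t^2$ is isolated as a Dol\'eans exponential $\mathcal{E}((1-\nu)\sigma_S W^2)_t$, leaving a residual deterministic drift $-\tfrac12 \nu(1-\nu)^2 \sigma_S^2 t$ and a residual stochastic drift $-\nu \int_0^t r_u\, du$. Taking this Dol\'eans exponential as the Radon--Nikodym density of a new measure and applying Girsanov to \eqref{rt} shifts $W^1$ by $\rho(1-\nu)\sigma_S\, dt$; after rescaling the process $\nu r_t$, I obtain Longstaff dynamics with the stated parameters $\bar\theta_r, \bar m_r, \bar\sigma_r$, and the remaining discount expectation matches $P(r_0, t, \bar\theta_r, \bar m_r, \bar\sigma_r)$ through Proposition~\ref{p-formulas}.

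On the catastrophic side, the key observation is that $(\stc_t)^{1-\nu}/\Phi(t)$ is a $\Q$-martingale in each regime, where $\Phi(t) := \E^\Q[(\stc_t)^{1-\nu}]$; this follows from \eqref{stc-formula} together with the cumulant identities $\alpha\kappa_i = 1 - (\mathcal{L}f_X^i)(\alpha)$ that make the deterministic compensator in $\stc$ cancel the jump drift exactly. Using this martingale as an Esscher-type density defines a tilted measure $\Q^\nu$ under which the arrival intensities become $\lambda_u^{\nu,i}$ and the claim densities are exponentially tilted to $f_X^{\nu,i}$; the law of $\tau$ under $\Q^\nu$ is what the theorem names $F_\tau^\nu$, with tilted parameters given by Propositions~\ref{ilp-lt-distribution}, \ref{ila-lt-distribution}, \ref{pla-lt-distribution}. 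Combining the two measure changes and invoking Doob's optional sampling at the bounded stopping time $\tau \wedge T$ then produces the claimed product form: $\Phi(t)$ supplies the price of the catastrophic tilt, while $F_\tau^\nu(dt)$ supplies the tilted law of the trigger time.

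All that remains is the computation of $\Phi(t)$ in each regime. For \textbf{ILP}, independence of $L^1$ and $L^2$ factors $\Phi(t)$ into a product of two single-region compound-Poisson cumulant exponentials. For \textbf{ILA}, the shared counting process ($\kappa_1 = \kappa_2$) collapses $\Phi(t)$ into a single exponential involving the joint Laplace transform $(\mathcal{L}f_X^1)(\alpha(1-\nu))(\mathcal{L}f_X^2)(\beta(1-\nu))$. For \textbf{PLA}, I would first condition on $P = p$: by \eqref{pla-lt-merged} the combined exposure collapses to the single compound Poisson $(\alpha p + \beta(1-p)) L_t$, the trigger condition reduces to $L_t \geq D_p = \min\{D_1/p, D_2/(1-p)\}$, and $F_\tau^{\nu|p}$ follows from the standard compound-Poisson series expansion; an outer integration against $F_P(dp)$ then gives the stated formula. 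I expect the main obstacle to be precisely this PLA step, because the Esscher tilt of the jump intensity depends on $p$ only inside the conditional expectation whereas the compensator in \eqref{stc-formula} has been calibrated against the average of $P$; assembling $\Phi(t,p)$ cleanly therefore requires carefully separating the $p$-dependent tilting factor $(\mathcal{L}f_X)((1-\nu)(\alpha p + \beta(1-p)))$ from the $p$-free compensator, which is what produces the residual $\E^{\P_C}$-expectation on the second line of the $\Phi(t,p)$ formula.
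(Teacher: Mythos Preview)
Your plan is correct and matches the paper's proof essentially step for step: an Esscher-type exponential tilt on the catastrophic side (the paper writes the density $\eta(t)$ directly, but it coincides with your $(\stc_t)^{1-\nu}/\Phi(t)$ once one checks $\Phi(t)=\E^\Q[(\stc_t)^{1-\nu}]$), followed by the same Girsanov argument on the financial side borrowed from \cite{bgp}, with the PLA case handled by conditioning on $P=p$ before tilting. Your identification of the PLA subtlety---that the compensator in \eqref{stc-formula} is calibrated to the \emph{average} of $P$ while the tilt is $p$-dependent---is exactly the point that produces the residual $\E^{\P_C}$ term in $\Phi(t,p)$, and the paper handles it the same way.
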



\section{Risk-neutral pricing: proofs}
\label{sec:proof}

In order to price multi-region CoCoCat bonds, we redevelop methods introduced in \cite{bgp} for the single-region model. We provide three proofs, each for different assumptions on the loss processes: ILP, ILA, PLA, described in Sections \ref{ilp-definition}, \ref{ila-definition}, \ref{pla-definition}, respectively.

We first notice that the model's equations \eqref{rt}, \eqref{stf}, \eqref{w1w2} and (20), (24), (25) in \cite{bgp} are exactly the same. In other words, in the 2D case we do not change the part of the model corresponding to the financial world. Thus, we will omit these parts in the proof which are identical to the one-dimensional one. Our main goal is to evaluate the expected value $\E I_2$ described in Lemma \ref{is-lemma}.

We divide the present section into three subsections. Although all three proofs may seem similar, their differences lie primarily in the complexity of the loss models. The first model is the simplest, while the last one is the most complex, requiring the use of more advanced methods for valuation. However, all three proofs share common procedural steps, which we describe below.

\textit{Step 1. Finding a risk-neutral measure $\Q$.} This step can be considered as an equivalent of Theorem 3 in \cite{bgp}. As mentioned before, the assumptions regarding the financial word in the 1D and 2D cases are the same, so we only focus on finding the value of $\kappa$ (or $\kappa_1$ and $\kappa_2$ for the independent loss processes) such that $\stc$ is a martingale.

\textit{Step 2. Finding the distribution of the trigger time $\tau$ with respect to $\Q$.} Recall that the trigger time $\tau$ is the first moment at which at least one of the events $\{L_1 \geq D_1\}$ or $\{L_2 \geq D_2\}$ occurs. Clearly, the distribution of $\tau$ depends on the distributions of $L_t^1$ and $L_t^2$, but sometimes it depends solely on the distribution of a certain linear combination of the processes $L_t^1$ and $L_t^2$. Identifying such processes, which we call $\tau$-dependencies, is crucial in the next steps.

\textit{Step 3. Defining a new measure $\P^\nu$ using the Radon--Nikodym derivative.} The most challenging aspect of evaluating $\E^\Q I_2$ is that the process $I_2$ is a certain function of $L_\tau^1$ and $L_\tau^2$, multiplied by the indicator of an event $\{\tau \leq T\}$, and at the same time $\tau$ depends on loss processes. In order to overcome this problem and simply reduce the terms $L_\tau^1$ and $L_\tau^2$ from the expectation, we introduce a new measure.

\textit{Step 4. Identifying the distribution of $\tau$-dependencies with respect to $\P^\nu$.} Since the $\tau$-dependencies found in Step 2 are always compound Poisson processes, the aim of this step is to prove that after change of measure, the $\tau$-dependencies preserve the type of distribution; that is, they are still compound Poisson processes but with different parameters. So we can easily identify the distribution of $\tau$ with respect to $\P^\nu$ simply by substituting the new parameters.

\textit{Step 5. Deriving pricing formula using the change of measure techniques.} We apply the results~of Steps 3 and 4. The new measure $\P^\nu$ introduced in Step 3 allows us to eliminate the problematic terms in $\E^\Q I_2$ and calculate this expectation relatively easily, since the distribution of $\tau$ with respect to $\P^\nu$ is known from Step 4.

While for the first two models (ILP and ILA) this scheme can be applied directly, for the third model (PLA)  major modifications are needed. This is due to the presence of an additional source of randomness, which is the proportion coefficient $P$. We provide a detailed description of these changes in the subsection dedicated to this model. 

\subsection{Independent loss process}\label{ilp-proof}

Recall that under ILP assumption all variables $N_t^1$, $N_t^2$, $X_k^1$, $X_k^2$ are independent, and thus the catastrophic share price process $\stc$ is a product of two independent single-region catastrophic share price processes: $\smash{S_t^{\mathcal{C},1}}$, $\smash{S_t^{\mathcal{C},2}}$ (see \eqref{ilp-stc-product}).

\begin{lemma}\label{ilp-kappa-lemma}
    For ILP and
    \formula[ilp-kappa]{
    & \kappa_1 = \frac{1}{\alpha}(1-(\mathcal{L}f_X)(\alpha)), & \kappa_2 = \frac{1}{\beta}(1-(\mathcal{L}f_X)(\beta))
    }
    there exists a risk-neutral measure $\Q = \Q_F \otimes \P_C$ and the catastrophe-risk and financial market risk variables under this measure are captured by equations \eqref{rt}-\eqref{lts}.
\end{lemma}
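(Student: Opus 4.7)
The plan is to leverage the decomposition \eqref{ilp-stc-product} and reduce the problem to two applications of the corresponding single-region result (Theorem 3 in \cite{bgp}). Under ILP, the two factors $S_t^{\mathcal{C},1}$ and $S_t^{\mathcal{C},2}$ are driven by independent compound Poisson processes, each exactly of the form treated in the 1D framework of \cite{bgp}. Writing $\Q=\Q_F\otimes\P_C$ means that on the financial side we perform the same Girsanov-type change of measure as in \cite{bgp} to absorb the market price of Brownian risk (this part is \emph{verbatim} the 1D argument and will just be quoted), while on the catastrophe side we retain the real-world measure $\P_C$, in line with Assumption~\ref{ass_2}.

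First I would establish independently that under $\P_C$ each factor $S_t^{\mathcal{C},i}$ is a $\P_C$-martingale in its own filtration. For a compound Poisson process $L_t^i=\sum_{k=1}^{N_t^i}X_k^i$ with time-inhomogeneous intensity $\lambda_u^i$ and i.i.d.\ jumps with density $f_X^i$, the standard computation
\formula{
\E^{\P_C}\!\left[\exp\!\left(-\alpha_i L_t^i\right)\right]=\exp\!\left(-\Lambda_t^i\bigl(1-(\mathcal{L}f_X^i)(\alpha_i)\bigr)\right)
}
combined with independent increments shows that $\exp(-\alpha_i L_t^i+\alpha_i\kappa_i\Lambda_t^i)$ is a martingale precisely when $\alpha_i\kappa_i=1-(\mathcal{L}f_X^i)(\alpha_i)$, which for $\alpha_1=\alpha$, $\alpha_2=\beta$ yields the formulas \eqref{ilp-kappa}. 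Since $N_t^1,N_t^2,X_k^1,X_k^2$ are mutually independent, the two factors are independent martingales, and therefore their product $\stc=S_t^{\mathcal{C},1}S_t^{\mathcal{C},2}$ is itself a $\P_C$-martingale.

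Next I would assemble the full measure $\Q=\Q_F\otimes\P_C$ on the product space, using Assumption~\ref{ass_1} to guarantee that financial and catastrophic filtrations are independent and that the product measure is well-defined and equivalent to $\P$. Under this measure the dynamics \eqref{rt}, \eqref{stf}, \eqref{w1w2} follow from the 1D construction in \cite{bgp} (with $\theta_r,m_r$ the corresponding risk-neutral parameters), while \eqref{stc} and \eqref{lts} hold trivially because we have not changed the law of the catastrophe variables. Martingality of the discounted share price $B(0,t)S_t$ then follows by factoring into the independent financial and catastrophic components and invoking the martingale property of each separately.

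The main obstacle is not computational but conceptual: one must justify that $\Q_F\otimes\P_C$ is in fact a valid \emph{risk-neutral} measure in this incomplete-market setting. This is where Assumptions~\ref{ass_1} and \ref{ass_2} do the heavy lifting — independence lets us build the measure as a product, and the zero-risk-premium postulate on jumps lets us take $\P_C$ itself for the catastrophic marginal. Beyond this, a minor subtlety is that $\lambda_u^i$ are deterministic but possibly time-inhomogeneous, so the exponential martingale property must be verified in the time-inhomogeneous setting; this is straightforward from the independent-increments structure of the non-homogeneous Poisson process and adds no real difficulty.
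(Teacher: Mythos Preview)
Your proposal is correct and follows essentially the same approach as the paper: both reduce to the 1D case via the factorisation \eqref{ilp-stc-product}, invoke independence of $S_t^{\mathcal{C},1}$ and $S_t^{\mathcal{C},2}$ to split the martingale condition, and then appeal to Theorem~3 in \cite{bgp} for each factor to obtain \eqref{ilp-kappa}. The paper's proof is slightly terser in that it simply cites the 1D result rather than redoing the compound-Poisson MGF computation, but the logic is identical.
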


\begin{proof}
    Since the financial-world measure is the same as in the 1D case, the second part of the lemma follows for the first part of the proof of Theorem 3 in \cite{bgp}. It only remains to find the values of $\kappa_1, \kappa_2>0$ for which the process $\stc$ is a martingale, that is, the equation $\E^{\P_C}[\stc\,|\,\mathcal{C}_s]=S_s^\mathcal{C}$ is satisfied for all $s<t$. By independence of $\smash{S_t^{\mathcal{C},1}}$ and $\smash{S_t^{\mathcal{C},2}}$, we have
    \formula{
    \E^{\P_C}[\stc\, |\,\mathcal{C}_s] =  \E^{\P_C}[S_t^{\mathcal{C},1} S_t^{\mathcal{C},2}\,|\,\mathcal{C}_s] = \E^{\P_C} [S_t^{\mathcal{C},1} \, | \,\mathcal{C}_s ] \, \E^{\P_C} [S_t^{\mathcal{C},2} \, |\, \mathcal{C}_s ].
    }
    Since $\smash{S_t^{\mathcal{C},1}}$ and $\smash{S_t^{\mathcal{C},2}}$ are catastrophic share prices for the 1D model, by Theorem 3 in \cite{bgp}, for $\kappa_1, \kappa_2$ given by \eqref{ilp-kappa} the processes $\smash{S_t^{\mathcal{C},1}}$ and $\smash{S_t^{\mathcal{C},2}}$ are martingales. Hence,
    \formula{
    \E^{\P_C}[\stc\, |\,\mathcal{C}_s] = S_s^{\mathcal{C},1} S_s^{\mathcal{C},2} = S_s^{\mathcal{C}},
    }
    and the martingale condition for $\smash{\stc}$ is satisfied.
\end{proof}

\begin{proposition}\label{ilp-tau}
    For ILP, under measure $\Q$, trigger time $\tau$ has a distribution function $F_\tau$ given by
    \formula{
    F_\tau(t) = 1 - \exp(-\Lambda_t^1-\Lambda_t^2) \left( \sum_{n=0}^\infty \frac{(\Lambda_t^1)^n}{n!} (F_X^1)^{n*} (D_1) \right) \left( \sum_{n=0}^\infty \frac{(\Lambda_t^2)^n}{n!} (F_X^2)^{n*} (D_2) \right)
    }
    for $t\geq0$.
\end{proposition}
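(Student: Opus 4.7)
The plan is to exploit the product structure of $\smash{S_t^\mathcal{C}}$ and, more fundamentally, the independence of $L_t^1$ and $L_t^2$ that is built into the ILP assumption, together with the factorisation $\Q = \Q_F \otimes \P_C$ established in Lemma \ref{ilp-kappa-lemma}. Since the trigger time depends only on the catastrophic components, I may compute everything under $\P_C$, and under $\P_C$ the ILP assumption preserves the pairwise independence of $N_t^1, N_t^2, X_k^1, X_k^2$.

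First I would rewrite the survival event as a conjunction:
\formula*{
\{\tau > t\} = \{\tau_1 > t\} \cap \{\tau_2 > t\} = \{L_t^1 < D_1\} \cap \{L_t^2 < D_2\},
}
where the last equality uses that $L_t^i$ is non-decreasing in $t$, so that $\tau_i > t$ is equivalent to $L_t^i < D_i$. Because $L_t^1$ and $L_t^2$ are independent under ILP, this yields the factorisation
\formula*{
\Q(\tau > t) = \P_C(L_t^1 < D_1)\, \P_C(L_t^2 < D_2).
}

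Second, for each $i = 1,2$ I would condition on $N_t^i$ and use the standard compound-Poisson computation: since $N_t^i$ is non-homogeneous Poisson with cumulative intensity $\Lambda_t^i$ and the claims $X_k^i$ are i.i.d. with distribution $F_X^i$, independent of $N_t^i$, the sum $X_1^i + \ldots + X_n^i$ has distribution $(F_X^i)^{n*}$ (with the convention $(F_X^i)^{0*}(D_i) = 1$), giving
\formula*{
\P_C(L_t^i < D_i) = \sum_{n=0}^\infty \frac{(\Lambda_t^i)^n}{n!} e^{-\Lambda_t^i} (F_X^i)^{n*}(D_i).
}

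Third, I would multiply the two expressions, pull the exponential factors $e^{-\Lambda_t^1}$ and $e^{-\Lambda_t^2}$ outside the sums, and combine them as $\exp(-\Lambda_t^1 - \Lambda_t^2)$. Subtracting from $1$ then produces the stated formula for $F_\tau(t)$. There is no real obstacle; the only points worth stating carefully are (i) why the loss-process independence carries over to $\P_C$ (this is immediate from Lemma \ref{ilp-kappa-lemma}, since the Radon--Nikodym density on the catastrophe side factors across regions), and (ii) the equivalence $\{\tau_i > t\} = \{L_t^i < D_i\}$, which uses monotonicity of $L_t^i$ and justifies dropping the infimum inside the probability.
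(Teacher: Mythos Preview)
Your proposal is correct and follows essentially the same route as the paper: reduce to $\P_C$, use $\{\tau>t\}=\{L_t^1<D_1\}\cap\{L_t^2<D_2\}$, factor by independence, and apply the compound-Poisson distribution formula for each $\P_C(L_t^i<D_i)$. One small remark: your point (i) about the Radon--Nikodym density factoring is unnecessary here, since by Lemma~\ref{ilp-kappa-lemma} the catastrophe-side measure $\P_C$ is the \emph{unchanged} real-world measure (Assumption~\ref{ass_2}), so the ILP independence holds directly without any measure-change argument.
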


\begin{proof}
    By the independence of $L_t^1$ and $L_t^2$, for $t>0$ we have
    \formula{
    \Q(\tau > t) &= \P_C(\tau > t) \\
    &= \P_C(\tau_1 > t, \, \tau_2 > t ) \\
    &= \P_C(L_t^1 < D_1, \, L_t^2 < D_2) \\
    &= \P_C(L_t^1 < D_1) \, \P_C(L_t^2 < D_2).
    }
    Recall that $L_t^1$ is a compound Poisson process with cumulative intensity $\Lambda_t^1$ and loss distribution function $F_X^1$. Thus,
    \formula{
    \P_C(L_t^1 < D_1) &= \sum_{n=0}^\infty \frac{(\Lambda_t^1)^n}{n!} \exp(-\Lambda_t^1) (F_X^1)^{n*} (D_1).
    }
    This yields the desired formula.
\end{proof}

Note that even tough the case of ILP is the simplest one considering in this paper, the density function of $\tau$, which we further denote as $f_\tau(t) = F'_\tau(t)$, does not follow any neat formula, as opposed to the other two assumptions. Since the calculations are easy, but rather tedious, we omit it here. 

Following the third step, we now define a new measure $\P^\nu$ by Radon-Nikodym derivative. We put
\formula*[ilp-eta]{
\frac{d\P^\nu}{d\P_C} \Bigg\vert_{\hat{\mathcal{C}}_t} &= \exp\big( -\alpha(1-\nu)L_t^1 - \beta(1-\nu)L_t^2 \\ &+ \Lambda_t^1 (1-(\mathscr{L}f_X^1)(\alpha(1-\nu)) + \Lambda_t^2 (1-(\mathscr{L}f_X^2)(\beta(1-\nu)) \big) = \eta(t),
}
where $\eta(t)$ is an exponential martingale. Comparing \eqref{ilp-eta} to (47)-(49) from \cite{bgp}, one can observe that for ILP the transformation kernel $\eta(t)$ is a product of two kernels corresponding to independent single-region catastrophic share price processes. This fact should not be surprising regarding the independence of $L_t^1$ and $L_t^2$.

By Proposition \eqref{ila-tau}, we deduce that the processes $L_t^1$ and $L_t^2$ are $\tau$-dependencies for ILP. Hence, we are interested in finding their distributions under $\P^\nu$.

\begin{proposition}\label{ilp-lt-distribution}
    The processes $L_t^1, L_t^2$ under the measure $\P^\nu$ are compound Poisson process with the same frequency with cumulative intensities
    \formula{
    &\Lambda^{\nu,1}_t = \Lambda_t^1 (\mathcal{L}f_X^1)(\alpha(1-\nu)), &  \Lambda^{\nu,2}_t = \Lambda_t^2 (\mathcal{L}f_X^2)(\beta(1-\nu))&
    }
    and loss distribution functions
    \formula{
    & F_X^{\nu,1}(x) = \frac{\exp(-\alpha(1-\nu)x)}{(\mathcal{L}f_X^1)(\alpha(1-\nu))} F_X^1(x), & F_X^{\nu,2}(x) = \frac{\exp(-\beta(1-\nu)x)}{(\mathcal{L}f_X^2)(\beta(1-\nu))} F_X^2(x). &
    }
\end{proposition}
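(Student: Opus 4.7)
The plan is to exploit the product structure of the Radon--Nikodym density $\eta(t)$ displayed in \eqref{ilp-eta} together with the independence of $L_t^1$ and $L_t^2$ under $\P_C$, thereby reducing the two-dimensional change of measure to two parallel one-dimensional Esscher tiltings. First I would split
\[
\eta(t) = \eta_1(t)\,\eta_2(t), \qquad \eta_i(t)=\exp\!\bigl(-\alpha_i(1-\nu)L_t^i+\Lambda_t^i(1-(\mathcal{L}f_X^i)(\alpha_i(1-\nu)))\bigr),
\]
with the convention $\alpha_1=\alpha$, $\alpha_2=\beta$. Each $\eta_i$ is measurable with respect to the natural filtration of $L_t^i$ alone, and the 1D Esscher-martingale result invoked in \cite{bgp} (itself a short Laplace-transform computation for a compound Poisson process) already guarantees that $\eta_i$ is a mean-one $\P_C$-martingale.

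Next, because $L_t^1$ and $L_t^2$ are $\P_C$-independent under ILP and $\eta$ factorizes into functions of these independent pieces, the joint Laplace transform under $\P^\nu$ splits:
\[
\E^{\P^\nu}\bigl[e^{-u_1 L_t^1-u_2 L_t^2}\bigr]=\E^{\P_C}\bigl[\eta_1(t)e^{-u_1 L_t^1}\bigr]\cdot\E^{\P_C}\bigl[\eta_2(t)e^{-u_2 L_t^2}\bigr].
\]
Hence $L_t^1$ and $L_t^2$ remain independent under $\P^\nu$, and each marginal coincides with the law produced by a one-dimensional Esscher tilt. Using the compound Poisson Laplace exponent $\E^{\P_C}[e^{-v L_t^i}]=\exp(-\Lambda_t^i(1-(\mathcal{L}f_X^i)(v)))$, a short computation gives
\[
\E^{\P^\nu}\bigl[e^{-u L_t^1}\bigr]=\exp\!\left(-\Lambda_t^1(\mathcal{L}f_X^1)(\alpha(1-\nu))\left(1-\frac{(\mathcal{L}f_X^1)(u+\alpha(1-\nu))}{(\mathcal{L}f_X^1)(\alpha(1-\nu))}\right)\right),
\]
which I then recognize as the Laplace exponent of a compound Poisson process with cumulative intensity $\Lambda_t^{\nu,1}=\Lambda_t^1(\mathcal{L}f_X^1)(\alpha(1-\nu))$ and jump density $f_X^{\nu,1}(x)=e^{-\alpha(1-\nu)x}f_X^1(x)/(\mathcal{L}f_X^1)(\alpha(1-\nu))$. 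The same computation with $\beta$ in place of $\alpha$ handles $L_t^2$. Lifting this finite-dimensional identification to the full process law is routine: the cocycle property of $\eta_i(t)/\eta_i(s)$ being $\P_C$-independent of $\hat{\mathcal{C}}_s$, together with independence and stationarity of compound Poisson increments, yields the process-level statement exactly as in the 1D proof of \cite{bgp}.

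I do not anticipate a genuine obstacle. Once the factorization $\eta=\eta_1\eta_2$ and the ILP independence are observed, everything is pure 1D Esscher machinery applied twice in parallel. The only items requiring a few lines of care are verifying that $\eta$ has mean one (automatic from the factor-wise martingale property combined with independence), checking that the proposed jump density integrates to one (immediate from the definition of $\mathcal{L}f_X^i$), and reading off the parameters claimed in the proposition from the displayed Laplace exponent.
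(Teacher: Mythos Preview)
Your proposal is correct and follows essentially the same route as the paper: both arguments identify the law of $L_t^i$ under $\P^\nu$ by computing its Laplace transform via $\E^{\P^\nu}e^{-zL_t^1}=\E^{\P_C}[\eta(t)e^{-zL_t^1}]$, using the $\P_C$-independence of $L_t^1$ and $L_t^2$ to separate the factors, and then reading off the tilted intensity and jump distribution from the resulting compound-Poisson exponent. The only cosmetic difference is that you make the factorization $\eta=\eta_1\eta_2$ explicit at the outset and compute the joint transform (thereby also recording that independence persists under $\P^\nu$), whereas the paper computes the marginal for $L_t^1$ directly and declares the argument for $L_t^2$ identical.
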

\begin{proof}
    We prove the assertion by comparing moment generating functions. Recall that for compound Poisson process $L_t^1$ and $z \in \R$ we have
    \formula[ilp-mgf-pc]{
    \E^{\P_C} \exp(-z L_t^1) = \exp ( ((\mathcal{L}f_X^1)(z) -1) \Lambda_t^1). 
    }
    By the formula \eqref{ilp-eta} and independence of $L_t^1$ and $L_t^2$,
    \formula{
    \E^{\P^\nu} \exp(-z L_t^1) &= \E^{\P_C}  \left[ \exp(-z L_t^1 ) \, \eta(t) \right] \\
    &= \E^{\P_C} \exp( -(z + \alpha(1-\nu))L_t^1 - \beta(1-\nu) L_t^2)  \\
    & \times \exp(\Lambda_t^1 (1-(\mathscr{L}f_X^1)(\alpha(1-\nu)) + \Lambda_t^2 (1-(\mathscr{L}f_X^2)(\beta(1-\nu)) ) \\
    &= \exp( ((\mathscr{L}f_X^1)(z + \alpha(1-\nu))-1) \Lambda_t^1 ) \\
    & \times \exp( ((\mathscr{L}f_X^2)(\beta(1-\nu))-1) \Lambda_t^2 ) \\
    & \times \exp(\Lambda_t^1 (1-(\mathscr{L}f_X^1)(\alpha(1-\nu)) + \Lambda_t^2 (1-(\mathscr{L}f_X^2)(\beta(1-\nu)) ) \\
    &= \exp(\Lambda_t^1((\mathscr{L}f_X^1)(z + \alpha(1-\nu)) - (\mathscr{L}f_X^1)(\alpha(1-\nu)))) \\
    &= \exp \left( \Lambda_t^1 (\mathscr{L}f_X^1)(\alpha(1-\nu)) \left( \frac{(\mathscr{L}f_X^1)(z+\alpha(1-\nu))}{(\mathscr{L}f_X^1)(\alpha(1-\nu))}-1\right)\right).
    }
    Comparing the above formula to \eqref{ilp-mgf-pc}, we obtain
    \formula{
    \Lambda_t^{\nu,1} = \Lambda_t^1 (\mathscr{L}f_X^1)(\alpha(1-\nu)).
    }
    Moreover,
    \formula{
    \frac{(\mathscr{L}f_X^1)(z+\alpha(1-\nu))}{(\mathscr{L}f_X^1)(\alpha(1-\nu))} = \int_0^\infty e^{-xz} \frac{e^{-\alpha(1-\nu)x} F(dx)}{(\mathscr{L}f_X^1)(\alpha(1-\nu))} = \int_0^\infty e^{-xz} F_X^{\nu,1}(dx) = (\mathscr{L}F_X^{\nu,1})(z).
    }
    and hence we read that $F^{\nu,1}_X$ has the desired form. The remaining part of the proof for $L_t^2$ is identical.
\end{proof}

Combining Propositions \ref{ilp-tau} and \ref{ilp-lt-distribution}, we find that distribution function $F_\tau^\nu$ of $\tau$ with respect to the measure $\P^\nu$ is given by the same formula but with $\Lambda_t^{\nu, i}$ and $F_X^{\nu, i}$ instead of $\Lambda_t^i$ and $F_X^i$ for $i=1,2$. With these tools, we are now ready to evaluate the expected value of $I_2$. Omitting the constants, by \eqref{stf-formula}, \eqref{stc-formula}, \eqref{b0t}, we have
\formula*[ilp-expectation1]{
\E^\Q \big[ S_\tau^{1-\nu} B(0,\tau) &\ind_{\tau \leq T} \big] \\
&= \E^\Q \bigg[ \exp\bigg( -\alpha (1-\nu) L_\tau^1 - \beta (1-\nu) L_\tau^2 - \nu \int_0^\tau r_u \, du \\ &+ (1-\nu) \sigma_S W_\tau^2+ (1-\nu) \left( \alpha \kappa_1 \Lambda_\tau^1 + \beta \kappa_2 \Lambda_\tau^2 - \tfrac{1}{2} \tau \sigma_S^2 \right) \bigg) \ind_{\tau \leq T} \bigg].
}

We now define a new product measure $\overline{\Q} = \P^\nu \otimes \Q_F$ such that for any $A \in \hat{\mathcal{C}}_t$ and $B \in \hat{\mathcal{F}}_t$ we have
\formula{
\E^{\overline{\Q}} \ind_{A \times B} = \E^{\P^\nu} \ind_A \, \E^{\Q_F} \ind_B = \E^{\P^\nu}[\eta(t) \ind_A] \, \E_F \ind_B =  \E^{\Q}[\eta(t) \ind_{A \times B}].
}
It follows that
\formula{
\frac{d\overline{\Q}}{d\Q} \bigg\vert_{\mathcal{G}_t} = \eta(t),
}
where $\eta(t)$ is defined by \eqref{ilp-eta}. In the above equation one can change deterministic $t$ to a stopping time $\tau$, provided that $\tau < \infty$ (see Proposition 1.7.1.4 in \cite{jyc}). Thus, \eqref{ilp-expectation1} is equal to
\formula[ilp-expectation2]{
\E^{\overline{\Q}} \bigg[ \varphi(\tau) \exp\bigg( - \nu \int_0^\tau r_u \, du + (1-\nu) \sigma_S W_\tau^2  \bigg) \ind_{\tau \leq T}
\bigg],
}
where
\formula{
\varphi(\tau) = \exp \bigg( -\Lambda_\tau^1 (1-(\mathscr{L}f_X^1)(\alpha(1-\nu)) &- \Lambda_\tau^2 (1-(\mathscr{L}f_X^2)(\beta(1-\nu)) \\
&+ (1-\nu) \left( \alpha \kappa_1 \Lambda_\tau^1 + \beta \kappa_2 \Lambda_\tau^2 - \tfrac{1}{2} \tau \sigma_S^2 \right) \bigg).
}
Denoting the density of $\tau$ under $\overline{\Q}$ (or under $\P^\nu$) by $f_\tau^\nu$ and conditioning \eqref{ilp-expectation2} with $\tau$, we obtain
\formula*[ilp-expectation3]{
\E^{\overline{\Q}} \bigg[ \varphi(\tau) \, \E^{\overline{\Q}} &\bigg[ \exp\bigg( - \nu \int_0^\tau r_u \, du + (1-\nu) \sigma_S W_\tau^2 \bigg) \ind_{\tau \leq T}
\, | \, \tau \bigg] \bigg] \\
&= \int_0^T \varphi(t) \, \E^{\Q_F} \left[ \exp\left( - \nu \int_0^t r_u \, du + (1-\nu) \sigma_S W_t^2 \right) \right] f_\tau^\nu(t) \, dt.
}
By the arguments presented in \cite{bgp} (see Lemma 2 and equations (55)-(60) therein), the above expectation with respect to the financial measure $\Q_F$ can be simplified to
\formula{
\exp\left( \tfrac{1}{2}(1-\nu)^2 \sigma_S^2 t \right) P(r_0, t, \bar{\theta}_r, \bar{m}_r, \bar{\sigma}_r),
}
where
\formula[params]{
    & \bar{\theta}_r = \sqrt{\nu} (\theta_r - \sigma_r \sigma_S 
    \rho (1-\nu)), & &\bar{m}_r = \nu m_r \theta_r/\bar{\theta}_r, & \bar{\sigma}_r = \sqrt{\nu} \sigma_r.
}
Therefore \eqref{ilp-expectation3} can be expressed as
\formula{
\int_0^t \varphi(t) \exp\left( \tfrac{1}{2}(1-\nu)^2 \sigma_S^2 t \right) P(r_0, t, \bar{\theta}_r, \bar{m}_r, \bar{\sigma}_r) f_\tau^\nu(t) \, dt,
}
which proves the theorem for ILP case. \qed

\subsection{Independent loss amounts}\label{ila-proof}

The second type of aggregated loss process that we consider is ILA, in which the loss amounts are independent with common counting process $N_t$. Merged aggregate loss process $L_t$ defined by \eqref{ila-lt-merged} obeys compound Poisson distribution with loss amounts
\formula{
X_k = \frac{\alpha}{\alpha + \beta} X_k^1 + \frac{\beta}{\alpha+\beta} X_k^2.
}
It is easy to see that if $F_X^1$ and $F_X^2$ are distribution functions of $X_1$ and $X_2$, respectively, then the distribution function $F_X$ of $X_k$ is given by the convolution 
\formula[ila-F]{
F_X(x) = (\tilde{F}_X^1 * \tilde{F}_X^2)(x),
}
where
\formula{
& \tilde{F}_X^1(x) = F_X^1 \left( \frac{\alpha x}{\alpha+\beta} \right), & \tilde{F}_X^2(x) = F_X^2 \left( \frac{\beta x}{\alpha+\beta} \right). &
}
In this case, catastrophic share price process \eqref{stc} reads
\formula{
\stc &= \exp \left( -(\alpha+\beta)L_t + (\alpha+\beta) \kappa \int_0^t \lambda_u \, du \right).
}

The proof for ILA also follows the steps described at the beginning of this section. Below we present a few results crucial in proving the pricing formula.

\begin{lemma}
    For ILA and
    \formula[ila-kappa]{
    \kappa = \frac{1}{\alpha+\beta} \left(1- (\mathcal{L}f_X^1)(\alpha) (\mathcal{L}f_X^2)(\beta) \right)
    }
    there exists a risk-neutral measure $\Q = \Q_F \otimes \P_C$ and the catastrophe-risk and financial market risk variables under this measure are captured by equations \eqref{rt}-\eqref{lts}.
\end{lemma}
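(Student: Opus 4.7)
The plan is to mirror the ILP proof of Lemma~\ref{ilp-kappa-lemma} almost verbatim, since the financial-side measure $\Q_F$ is unaffected by the choice of loss model and is handled exactly as in \cite{bgp}. So the work reduces to checking that, with the stated value of $\kappa$, the catastrophic share price process $\stc$ of \eqref{stc-formula} is a $\P_C$-martingale with respect to the catastrophic filtration $\{\mathcal{C}_t\}$.

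For this martingale check I would fix $s<t$ and split the increments: write
\formula*{
\stc = S_s^{\mathcal{C}} \cdot \exp\bigl(-\alpha(L_t^1-L_s^1) - \beta(L_t^2-L_s^2) + (\alpha+\beta)\kappa (\Lambda_t - \Lambda_s)\bigr),
}
using that under ILA both regional counting processes coincide with $N_t$ (so $\alpha\kappa_1\Lambda_t^1+\beta\kappa_2\Lambda_t^2=(\alpha+\beta)\kappa\Lambda_t$). The $\mathcal{C}_s$-measurable factor $S_s^{\mathcal{C}}$ pulls out, so I only need to evaluate $\E^{\P_C}\exp(-\alpha(L_t^1-L_s^1)-\beta(L_t^2-L_s^2))$. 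Conditioning on the Poisson increment $\Delta N := N_t-N_s$ (which is independent of $\mathcal{C}_s$ and of the loss amounts) and using the ILA independence of $X_k^1$ and $X_k^2$ within each summation index, the inner expectation for a fixed $\Delta N=n$ factorises as $((\mathcal{L}f_X^1)(\alpha)(\mathcal{L}f_X^2)(\beta))^n$. Averaging over $\Delta N\sim\mathrm{Poi}(\Lambda_t-\Lambda_s)$ then yields
\formula*{
\E^{\P_C}\exp(-\alpha(L_t^1-L_s^1)-\beta(L_t^2-L_s^2)) = \exp\bigl((\Lambda_t-\Lambda_s)((\mathcal{L}f_X^1)(\alpha)(\mathcal{L}f_X^2)(\beta)-1)\bigr).
}

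Plugging this back in and demanding $\E^{\P_C}[\stc\mid\mathcal{C}_s]=S_s^{\mathcal{C}}$ forces the compensator identity $(\alpha+\beta)\kappa = 1 - (\mathcal{L}f_X^1)(\alpha)(\mathcal{L}f_X^2)(\beta)$, which is precisely \eqref{ila-kappa}. Hence, with this single $\kappa$, the process $\stc$ is a martingale under $\P_C$, and combining with $\Q_F$ from the 1D construction gives the product measure $\Q=\Q_F\otimes\P_C$ claimed.

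No step here is really an obstacle; the one point that deserves care is the contrast with ILP, where two independent one-dimensional martingale conditions produced separate $\kappa_1,\kappa_2$, whereas here the common counting process couples the two loss legs and produces a single joint condition involving the product of Laplace transforms. The ILA independence of $X_k^1$ and $X_k^2$ is exactly what makes this joint Laplace transform factorise, and that is the only model-specific ingredient entering the proof.
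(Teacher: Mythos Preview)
Your proof is correct. The paper takes a slightly different, more economical route: it first observes that under ILA the catastrophic share price can be rewritten in terms of the merged aggregate loss process $L_t=\sum_{k=1}^{N_t}X_k$ with $X_k=\tfrac{\alpha}{\alpha+\beta}X_k^1+\tfrac{\beta}{\alpha+\beta}X_k^2$, which reduces $\stc$ to a single-region process with loss-impact parameter $\alpha+\beta$ and loss distribution $F_X$. It then simply invokes Theorem~3 of \cite{bgp} to obtain $\kappa=\tfrac{1}{\alpha+\beta}(1-(\mathcal{L}f_X)(\alpha+\beta))$ and finishes by computing $(\mathcal{L}f_X)(\alpha+\beta)=\E\exp(-\alpha X_k^1-\beta X_k^2)=(\mathcal{L}f_X^1)(\alpha)(\mathcal{L}f_X^2)(\beta)$ via the independence of $X_k^1$ and $X_k^2$. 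Your direct verification of the martingale property (conditioning on $N_t-N_s$ and factorising) is exactly what underlies that 1D theorem, so the two proofs are morally identical; the paper's version just recycles an existing result, while yours is self-contained and makes the role of the common counting process explicit.
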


\begin{proof}
    Inserting $\alpha+\beta$ instead of $\alpha$ and $F_X$ given by \eqref{ila-F} into Theorem 3 in \cite{bgp} leads to our claim with
    \formula{
    \kappa = \frac{1}{\alpha+\beta}(1-(\mathcal{L}f_X)(\alpha+\beta)).
    }
    Recall that $X_k = \frac{\alpha}{\alpha + \beta} X_k^1 + \frac{\beta}{\alpha+\beta} X_k^2$ and by that,
    \formula{
    (\mathcal{L} f_X)(\alpha+\beta) &= \E^\Q \exp\left( -(\alpha+\beta) X_k \right) \\ &= \E^\Q \exp\left( -\alpha X_k^1 - \beta X_k^2 \right) = (\mathcal{L}f_X^1)(\alpha) (\mathcal{L} f_X^2)(\beta).
    }
    This concludes the proof.
\end{proof}

\begin{proposition}\label{ila-tau}
    For independent loss amounts, under measure $\Q$, trigger time $\tau$ has a distribution function $F_\tau$ and density $f_\tau$ given by
    \formula{
    F_\tau(t) &= 1- \sum_{n=0}^\infty \frac{\Lambda_t^n}{n!} \exp(-\Lambda_t) (F_X^1)^{n*} (D_1)(F_X^2)^{n*} (D_2), \\
    f_\tau(t) &= \lambda_t \exp(-\Lambda_t) \left( 1-\sum_{n=1}^\infty \left( \frac{\Lambda_t^{n-1}}{(n-1)!} - \frac{\Lambda_t^n}{n!} \right) (F_X^1)^{n*} (D_1)(F_X^2)^{n*} (D_2) \right)
    }
    for $t\geq0$.
\end{proposition}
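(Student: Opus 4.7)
The plan is to reduce everything to conditional probabilities given the common counting process $N_t$. Under the ILA assumption, the two aggregate loss processes $L_t^1$ and $L_t^2$ share the same Poisson jump times, so the event $\{\tau > t\}$ is simply $\{L_t^1 < D_1\} \cap \{L_t^2 < D_2\}$, and conditioning on $\{N_t = n\}$ fixes the number of terms in both sums to be the same.

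First I would write, for $t \geq 0$,
\formula{
\Q(\tau > t) = \P_C(L_t^1 < D_1, L_t^2 < D_2) = \sum_{n=0}^\infty \P_C(N_t = n)\, \P_C(L_t^1 < D_1, L_t^2 < D_2 \mid N_t = n).
}
On the event $\{N_t = n\}$, $L_t^1 = X_1^1 + \cdots + X_n^1$ and $L_t^2 = X_1^2 + \cdots + X_n^2$, and the ILA assumption (independence across $(i,k) \neq (j,l)$) implies that these two sums are independent conditional on $N_t = n$. Hence the joint conditional probability factors into $(F_X^1)^{n*}(D_1)\,(F_X^2)^{n*}(D_2)$, and combining this with the Poisson weights $\Lambda_t^n e^{-\Lambda_t}/n!$ gives the formula for $F_\tau(t) = 1 - \Q(\tau > t)$.

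For the density, I would differentiate the series for $\Q(\tau > t)$ term by term. Using $\frac{d}{dt}\Lambda_t = \lambda_t$, the derivative of $\Lambda_t^n e^{-\Lambda_t}/n!$ for $n \geq 1$ equals $\lambda_t e^{-\Lambda_t}\bigl(\tfrac{\Lambda_t^{n-1}}{(n-1)!} - \tfrac{\Lambda_t^n}{n!}\bigr)$, while the $n=0$ term contributes $-\lambda_t e^{-\Lambda_t}$. Taking the negative of the sum and using that the $n=0$ convolution factor equals $1$ yields the stated expression for $f_\tau(t)$. A minor sanity check is that termwise differentiation is justified by uniform convergence on compact $t$-intervals, since the series is dominated by $\sum_n \Lambda_t^n/n! = e^{\Lambda_t}$.

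The only real subtlety is the conditional independence step, which relies crucially on the ILA pairwise independence assumption; once this is in place, everything reduces to the standard compound-Poisson computation and the density follows by a routine (if slightly finicky) index shift. I do not expect any genuine obstacle beyond keeping the indexing straight in the differentiation.
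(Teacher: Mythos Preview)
Your proposal is correct and follows essentially the same approach as the paper: condition on the common counting process $N_t$, use conditional independence of $L_t^1$ and $L_t^2$ given $N_t$ to factor the joint probability into $(F_X^1)^{n*}(D_1)(F_X^2)^{n*}(D_2)$, sum against the Poisson weights, and differentiate term by term. If anything, your version is slightly more careful in noting the justification for termwise differentiation.
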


\begin{proof}
    Let $t >0$. Observe that $\tau$ does not depend on financial world and only on catastrophic world, hence
    \formula{
    \Q (\tau > t) &= \P_C  (\tau > t) \\
    &= \P_C(\tau_1 > t, \, \tau_2 > t) \\
    &= \P_C\left( L_t^1 < D_1, \, L_t^2 < D_2 \right) \\
    &= \E^{\P_C} \left[ \P_C\left( L_t^1 < D_1, \, L_t^2 < D_2 \, | \, N_t \right) \right] \\
    &=  \E^{\P_C} \left[ \P_C\left( L_t^1 < D_1 \, | N_t \,\right) \P_C \left( L_t^2 < D_2 \, | \, N_t \right) \right] \\
    &= \E^{\P_C} \left[ (F_X^1)^{N_t*}(D_1) (F_X^2)^{N_t*}(D_2) \right].
    }
    Above we used the fact that $L_t^1, L_t^2$ are independent conditionally on $N_t$. Since $N_t$ has a Poisson distribution with cumulative intensity $\Lambda_t$, we have
    \formula{
    \Q(\tau > t) = \sum_{n=0}^\infty \frac{\Lambda_t^n}{n!} \exp(-\Lambda_t) (F_X^1)^{n*}(D_1) (F_X^2)^{n*}(D_2),
    }
    Differentiating the above formula with respect to $t$ yields 
    \formula{
    \lambda_t \exp(-\Lambda_t) \left( \sum_{n=1}^\infty \left( \frac{\Lambda_t^{n-1}}{(n-1)!} - \frac{\Lambda_t^n}{n!} \right) (F_X^1)^{n*}(D_1) (F_X^2)^{n*}(D_2) -1 \right) 
    }
    and the assertion follows.
\end{proof}

The next part of the proof is also analogical; we define a new measure $\P^\nu$ by its Radon--Nikodym derivative:
\formula*[ila-eta]{
\frac{d \P^\nu}{d \P_C} \Bigg\vert_{\hat{\mathcal{C}}_t} &= \exp\bigg(-(1-\nu) (\alpha+\beta) L_t \\ &\quad+ \Lambda_t(1- (\mathcal{L}f_X^1)(\alpha (1-\nu)) (\mathcal{L}f_X^2)(\beta (1-\nu))) \bigg) = \eta(t),
}
where $\eta(t)$ is again an exponential martingale.

Similarly to the ILP case, for ILA the $\tau$-dependencies indicated by Proposition \ref{ila-eta} are again aggregated loss processes $L_t^1$ and $L_t^2$. The following result fulfils the fourth step of the proof.

\begin{proposition}\label{ila-lt-distribution}
    The processes $L_t^1, L_t^2$ under the measure $\P^\nu$ are compound Poisson process with the same frequency with cumulative intensity
    \formula{
    \Lambda^{\nu}_t = \Lambda_t (\mathcal{L}f_X^1)(\alpha(1-\nu)) (\mathcal{L}f_X^2)(\beta(1-\nu))
    }
    and loss distribution functions
    \formula{
    & F_X^{\nu,1}(x) = \frac{\exp(-\alpha(1-\nu)x)}{(\mathcal{L}f_X^1)(\alpha(1-\nu))} F_X^1(x), & F_X^{\nu,2}(x) = \frac{\exp(-\beta(1-\nu)x)}{(\mathcal{L}f_X^2)(\beta(1-\nu))} F_X^2(x). &
    }
\end{proposition}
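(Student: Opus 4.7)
The plan is to mirror the strategy used in the ILP case (Proposition \ref{ilp-lt-distribution}), but to compute a \emph{joint} moment generating function of $(L_t^1, L_t^2)$ under $\P^\nu$, since here the two aggregate loss processes share the common counting process $N_t$ and cannot be handled marginally and independently as before. Concretely, I would fix $z_1, z_2 \geq 0$ and aim to show that
\formula*{
\E^{\P^\nu}\exp(-z_1 L_t^1 - z_2 L_t^2) = \exp\bigl(\Lambda_t^\nu \bigl[(\mathcal{L}F_X^{\nu,1})(z_1)(\mathcal{L}F_X^{\nu,2})(z_2) - 1\bigr]\bigr),
}
which is exactly the joint Laplace transform of a compound Poisson process with shared intensity $\Lambda_t^\nu$ and independent loss amounts distributed according to $F_X^{\nu,1}$ and $F_X^{\nu,2}$. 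Uniqueness of Laplace transforms would then yield the claim.

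First, I would apply the Radon--Nikodym derivative \eqref{ila-eta} and use the decomposition $(\alpha+\beta)L_t = \alpha L_t^1 + \beta L_t^2$ (which follows from \eqref{ila-lt-merged}) to rewrite
\formula*{
\E^{\P^\nu}\exp(-z_1 L_t^1 - z_2 L_t^2) &= \E^{\P_C}\bigl[\exp\bigl(-(z_1 + \alpha(1-\nu))L_t^1 - (z_2 + \beta(1-\nu))L_t^2\bigr)\bigr] \\
&\quad \times \exp\bigl(\Lambda_t\bigl(1 - (\mathcal{L}f_X^1)(\alpha(1-\nu))(\mathcal{L}f_X^2)(\beta(1-\nu))\bigr)\bigr).
}
Second, I would evaluate the $\P_C$-expectation by conditioning on $N_t$ and using the independence of the $X_k^1$'s from the $X_k^2$'s: given $\{N_t = n\}$,
\formula*{
\E^{\P_C}\bigl[\exp(-a L_t^1 - b L_t^2) \mid N_t = n\bigr] = \bigl[(\mathcal{L}f_X^1)(a)(\mathcal{L}f_X^2)(b)\bigr]^n,
}
which upon summing the Poisson series yields $\exp(\Lambda_t[(\mathcal{L}f_X^1)(a)(\mathcal{L}f_X^2)(b) - 1])$ with $a = z_1 + \alpha(1-\nu)$, $b = z_2 + \beta(1-\nu)$.

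Third, I would substitute back and factor out $(\mathcal{L}f_X^1)(\alpha(1-\nu))(\mathcal{L}f_X^2)(\beta(1-\nu))$ to recognize the Esscher-tilted loss densities: the ratio $(\mathcal{L}f_X^1)(z_1+\alpha(1-\nu))/(\mathcal{L}f_X^1)(\alpha(1-\nu))$ equals $(\mathcal{L}F_X^{\nu,1})(z_1)$, and analogously for the second factor (this is the same manipulation as in the ILP proof). The main subtlety, though not a deep obstacle, is keeping track of the two additive contributions from $\eta(t)$ and from the joint Laplace transform of $(L_t^1,L_t^2)$ under $\P_C$ so that the normalising exponentials cancel cleanly; the only genuinely new ingredient relative to the ILP case is the conditioning on the shared $N_t$, which couples the two Esscher transforms into the product structure of the joint Laplace transform, confirming that under $\P^\nu$ the two processes retain their common counting process with the tilted intensity $\Lambda_t^\nu$ while the loss amounts remain mutually independent with the tilted marginals $F_X^{\nu,1}, F_X^{\nu,2}$.
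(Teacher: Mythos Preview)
Your proposal is correct and follows essentially the same route as the paper: apply the Radon--Nikodym density $\eta(t)$, condition on the shared counting process $N_t$, and identify the resulting expression as the Laplace transform of a compound Poisson process with the Esscher-tilted parameters. The one substantive difference is that the paper computes only the \emph{marginal} Laplace transform $\E^{\P^\nu}\exp(-zL_t^1)$ (and argues ``analogously for $L_t^2$''), whereas you compute the \emph{joint} transform $\E^{\P^\nu}\exp(-z_1L_t^1-z_2L_t^2)$. Your version is actually the more complete argument: the proposition asserts that under $\P^\nu$ the two processes share a common counting process with intensity $\Lambda_t^\nu$ and have independent loss amounts, and this joint structure---which is precisely what is needed when one substitutes into the formula of Proposition~\ref{ila-tau} to obtain $F_\tau^\nu$---is not pinned down by the marginals alone. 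The extra bookkeeping cost is negligible, since the conditioning on $N_t$ already produces the product $(\mathcal{L}f_X^1)(a)(\mathcal{L}f_X^2)(b)$ in both computations.
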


\begin{proof}
    We check the distributions directly by studying the moment generating functions. Recall that for the compound Poisson process $L_t^1$ and $z \in \R$ we have
    \formula[ila-mgf-pc]{
    \E^{\P_C} \exp(-z L_t^1) = \exp ( ((\mathcal{L}f_X^1)(z) -1) \Lambda_t). 
    }
    By the change of measure \eqref{ila-eta},
    \formula*[ila-mgf-pnu]{
    \E^{\P^\nu} \exp(-z L_t^1) &= \E^{\P_C}  \left[ \exp(-z L_t^1 ) \eta(t) \right] \\
    &= \exp((1- (\mathcal{L}f_X^1)((1-\nu)(\alpha)) (\mathcal{L}f_X^2)((1-\nu)(\beta))) \Lambda_t) \\
    & \times \E^{\P_C} \exp\left( -(z+\alpha(1-\nu)) L_t^1 - \beta(1-\nu) L_t^2 \right).
    }
    Recall that $L_t^1$ and $L_t^2$ are independent conditionally on $N_t$, hence the latter expectation is equal to
    \formula{
    \E^{\P_C} &\left[ \E^{\P_C} \left[\exp\left( -(z+\alpha(1-\nu)) L_t^1 - \beta(1-\nu) L_t^2 \right) \, | \, N_t \right] \right]
    \\
    &=\E^{\P_C} \bigg[ \E^{\P_C} \left[\exp ( -(z+\alpha(1-\nu)) L_t^1 ) \, | \, N_t \right] \, \E^{\P_C} \left[ \exp(- \beta(1-\nu) L_t^2 ) \, | \, N_t \right] \bigg] \\
    &= \E^{\P_C} \bigg[ \prod_{k=1}^{N_t} \E^{\P_C} \left[\exp ( -(z+\alpha(1-\nu)) X_1^1 )\right] \, \E^{\P_C} \left[\exp(- \beta(1-\nu) X_1^2 )\right] \bigg] \\
    &= \E^{\P_C} \bigg[ \prod_{k=1}^{N_t} (\mathcal{L}f_X^1)(z+\alpha(1-\nu)) (\mathcal{L}f_X^2)(\beta(1-\nu)) \bigg] \\
    &= \E^{\P_C} \bigg[ \big( (\mathcal{L}f_X^1)(z+\alpha(1-\nu)) (\mathcal{L}f_X^2)(\beta(1-\nu)) \big)^{N_t} \bigg].
    }
    We observe that the latter expected value above is a probability generating function of $N_t$ evaluated in $(\mathcal{L}f_X^1)(z+\alpha(1-\nu)) (\mathcal{L}f_X^2)(\beta(1-\nu))$. Hence, this is equal to
    \formula{
    \exp\big( ((\mathcal{L}f_X^1)(z+\alpha(1-\nu)) (\mathcal{L}f_X^2)(\beta(1-\nu)) -1) \Lambda_t \big).
    }
    Inserting this into \eqref{ila-mgf-pnu} gives
    \formula{
    \E^{\P^\nu} &\exp(-z L_t^1) \\
    &= \exp \big( (\mathcal{L}f_X^1)(z+\alpha(1-\nu)) (\mathcal{L}f_X^2)(\beta(1-\nu)) \Lambda_t \\
    &- (\mathcal{L}f_X^1)((1-\nu)(\alpha)) (\mathcal{L}f_X^2)((1-\nu)\beta)) \Lambda_t \big) \\
    &= \exp \left( \Lambda_t (\mathcal{L}f_X^1)(\alpha(1-\nu)) (\mathcal{L}f_X^2)(\beta(1-\nu)) \left( \frac{(\mathcal{L}f_X^1)(z+\alpha(1-\nu))}{(\mathcal{L}f_X^1)(\alpha(1-\nu))} -1 \right) \right).
    }
    We obtain our claim for $L_t^1$ by comparing the parameters with \eqref{ila-mgf-pc}. By similar arguments, one can prove an analogous result for $L_t^2$. 
\end{proof}

The rest of the calculations can be directly repeated from the proof for ILP. We change the product measure $\Q$ for the new one $\smash{\overline{\Q} = \P^\nu \otimes \Q_F}$ and after analogous transformations we obtain
\formula{
\E^\Q \big[ S_\tau^{1-\nu} B(0,\tau) &\ind_{\tau \leq T} \big] = \int_0^t \varphi(t) \exp\left( \tfrac{1}{2}(1-\nu)^2 \sigma_S^2 t \right) P(r_0, t, \bar{\theta}_r, \bar{m}_r, \bar{\sigma}_r) f_\tau^\nu(t) \, dt,
}
where the density $f_\tau^\nu$ of $\tau$ under $\P^\nu$ follows from replacing $\Lambda_t$ and $F_X$ in density $f_\tau$ from Proposition~\ref{ila-tau} with $\Lambda_t^\nu$ and $F_X^\nu$ from Proposition~\ref{ila-lt-distribution}, parameters $\bar{\theta}$, $\bar{m}_r$, $\bar{\sigma}_r$ are defined as in \eqref{params} and
\formula{
\varphi(t) = \exp \bigg( -\Lambda_t(1- &(\mathcal{L}f_X^1)(\alpha (1-\nu)) (\mathcal{L}f_X^2)(\beta (1-\nu))) \\
&+ (1-\nu) \left( (\alpha + \beta) \kappa \Lambda_t - \tfrac{1}{2} t \sigma_S^2 \right) \bigg).
}
This finishes the proof for ILA. \qed

\subsection{Proportional loss amounts} Now we will focus on the model with proportional loss amounts. Recall that in this case the catastrophic share price is given by the process
\formula[pla-stc]{
\stc = \exp\left( \alpha \sum_{k=1}^{N_t} P X_k + \beta \sum_{k=1}^{N_t} (1-P) X_k + \kappa (\alpha+ \beta) \int_0^t \lambda_u \, du \right),
}
where $N_t, P, X_k$ are all independent. Following the notation introduced in Section \ref{pla-definition}, let $Q = \alpha P + \beta(1-P)$ and $L_t = \sum_{k=1}^{N_t} X_k$ so that
\formula{
\alpha \sum_{k=1}^{N_t} P X_k + \beta \sum_{k=1}^{N_t} (1-P) X_k = Q L_t,
}

As before, we will start with finding a risk-neutral measure for this model. We state our result as a separate lemma. 

\begin{lemma}
    For RPL and
    \formula[pla-kappa]{
    \kappa = \frac{1}{\alpha+\beta} \left(1- \E^{\P_C} \left[ (\mathcal{L}f_X)(Q)\right] \right)
    }
    there exists a risk-neutral measure $\Q = \Q_F \otimes \P_C$ and the catastrophe-risk and financial market risk variables under this measure are captured by equations \eqref{rt}-\eqref{lts}.
\end{lemma}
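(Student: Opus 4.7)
The plan is to follow the same two-step template used for the ILP and ILA lemmas. Since the financial-market equations \eqref{rt}, \eqref{stf}, \eqref{w1w2} are identical to the 1D model of \cite{bgp}, the construction of $\Q_F$ (and the associated statements about the interest-rate dynamics and the two Brownian motions) transfers verbatim from the first part of the proof of Theorem~3 in \cite{bgp}. What is left to verify is that, on the catastrophe side, $\kappa$ can be chosen so that the process $\stc = \exp\bigl(-QL_t+(\alpha+\beta)\kappa\Lambda_t\bigr)$ is a $\P_C$-martingale, where $Q=\alpha P+\beta(1-P)$ and $L_t=\sum_{k=1}^{N_t}X_k$ are independent of $P$.

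The main idea is to condition on $P$. Fixing $P=p$ and setting $q=\alpha p+\beta(1-p)$, the process $qL_t$ becomes a standard compound Poisson process with jump Laplace transform $(\mathcal{L}f_X)(q)$, so that conditionally on $P=p$ the expression for $\stc$ takes the exact 1D catastrophic share-price form from \cite{bgp}. Applying Theorem~3 of \cite{bgp} on this conditional slice gives the slicewise martingale requirement $(\alpha+\beta)\kappa = 1-(\mathcal{L}f_X)(q)$. Since the lemma demands a single deterministic $\kappa$ valid for every realisation of $P$, I would integrate this relation against the law $F_P$ via the tower property, obtaining precisely \eqref{pla-kappa}. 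The identity $\E^{\P_C}[\stc\,|\,\mathcal{C}_s] = S_s^{\mathcal{C}}$ then follows by conditioning on $\sigma(P)\vee\mathcal{C}_s$, reducing to the 1D exponential martingale on each slice, and integrating the jump compensator against $F_P$.

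The step I expect to be the main obstacle is this martingale verification itself, because with a \emph{constant} $\kappa$ the identity cannot hold in the filtration naively enlarged by $\sigma(P)$: on that enlarged filtration Jensen's inequality forces $\stc$ to be a strict submartingale whenever $P$ is non-degenerate, while the slicewise correct constant is the random quantity $(1-(\mathcal{L}f_X)(Q))/(\alpha+\beta)$. I would therefore spell out carefully that $\mathcal{C}_t$ is to be taken as the filtration in which $P$ enters only through the observable jump stream of $\stc$; under that choice the predictable distribution of each jump is the $P$-marginal, the compensator of the jump part equals $\lambda_t\bigl(\E^{\P_C}[(\mathcal{L}f_X)(Q)]-1\bigr)\,dt$, and the drift $(\alpha+\beta)\kappa\lambda_t\,dt$ cancels it exactly, delivering a genuine $(\P_C,\mathcal{C}_t)$-martingale. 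With this filtration choice, independence of the financial and catastrophe blocks gives $\Q=\Q_F\otimes\P_C$, and equations \eqref{rt}--\eqref{lts} describe the dynamics under $\Q$ by construction.
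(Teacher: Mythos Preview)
Your route differs from the paper's. The paper does not condition on $P$ and invoke the 1D theorem slicewise; instead it computes $\E^{\P_C}\bigl[\exp(-Q(L_t-L_s))\bigr]$ directly by iterated conditioning on $Q$ and on $N_t-N_s$, recognising the outer expectation as the probability generating function of the Poisson increment evaluated at $\E^{\P_C}[(\mathcal{L}f_X)(Q)]$. This produces $\exp\bigl((\E^{\P_C}[(\mathcal{L}f_X)(Q)]-1)\int_s^t\lambda_u\,du\bigr)$ in one stroke, and matching exponents against $\exp\bigl(-(\alpha+\beta)\kappa\int_s^t\lambda_u\,du\bigr)$ gives \eqref{pla-kappa}. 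No slicewise martingale requirement is ever written down, so the tension you describe between ``a $\kappa$ per slice'' and ``a single $\kappa$'' simply does not arise in the paper's argument.

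Your own argument has the gap you already flag, and the patch you propose does not close it. Imposing $(\alpha+\beta)\kappa=1-(\mathcal{L}f_X)(q)$ on each slice $\{P=p\}$ and then averaging over $F_P$ does not show that the averaged $\kappa$ makes $\stc$ a martingale; it only confirms that no single $\kappa$ works slicewise. The fix---working in the natural filtration of $\stc$ and asserting that the jump compensator equals $\lambda_t\bigl(\E^{\P_C}[(\mathcal{L}f_X)(Q)]-1\bigr)\,dt$---is not correct either: in that filtration the jump sizes $QX_k$ are merely exchangeable (conditionally i.i.d.\ given $Q$), so the predictive law of the next jump is governed by the \emph{posterior} of $Q$ given the jumps already observed, not by the prior. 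The compensator is therefore path-dependent and does not reduce to the prior-averaged quantity once a jump has occurred. If you want to recover the paper's formula, the clean way is to abandon the slicewise reduction and compute the unconditional expectation $\E^{\P_C}[\exp(-Q(L_t-L_s))]$ directly, as the paper does.
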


\begin{proof}
Since the financial-world measure is the same as in 1-D case, the second part of the lemma follows for the first part of the proof of Theorem 3 in \cite{bgp}. It remains only to find the value of $\kappa>0$ for which the process $\stc$ is a martingale, that is the equation $\E^{\P_C}[\stc|\mathcal{C}_s]=S_s^\mathcal{C}$ is satisfied for all $s<t$.

This condition can be rewritten as 
\formula[martingale-condition]{
    \E^{\P_C} \left[ \exp\left( - Q (L_t-L_s) + \kappa (\alpha+\beta) \int_s^t \lambda_u \, du  \right) | \, \mathcal{C}_s \right] = 1.
}
Let us inspect expectation $\E^{\P_C} \exp(-Q(L_t-L_s))$. Conditioning twice, we obtain
\formula{
\E^{\P_C} \exp(-Q(L_t-L_s))
&= \E^{\P_C} \left[ \E^{\P_C}[\E^{\P_C}[\exp(-Q(L_t-L_s)) \,| \, Q ] | \, N_t - N_s] ] \right] \\
&= \E^{\P_C} \left[ \E^{\P_C}\left[\E^{\P_C} \left[ \exp\left(-Q \sum_{k=1}^{N_t-N_s} X_k\right) |\, Q \right] | \, N_t - N_s \right] \right] \\
&= \E^{\P_C} \left[ \prod_{k=1}^{N_t-N_s} \E^{\P_C}\left[ \E^{\P_C} \left[ \exp\left(-Q X_k \right) | \, Q \right] \right] \right] \\
&= \E^{\P_C} \left[ \prod_{k=1}^{N_t-N_s} \E^{\P_C} [(\mathcal{L}f_X)(Q)]  \right] \\
&= \E^{\P_C}\left[ (\E^{\P_C} [(\mathcal{L}f_X)(Q)])^{N_t-N_s}  \right] \\
&= G_{N_t-N_s} (\E^{\P_C} [(\mathcal{L}f_X)(Q)]) \\
&= \exp \left( (\E^{\P_C} [(\mathcal{L}f_X)(Q)]-1) \int_s^t \lambda_u \, du \right).
}
where $G_{N_t-N_s}$ is the probability generating function of the Poisson random variable with mean $\int_s^t \lambda_u \, du$. Hence, \eqref{martingale-condition} is satisfied if and only if
\formula{
\kappa = \frac{1}{\alpha+\beta} \left(1- \E^{\P_C}\left[ (\mathcal{L}f_X)(Q)\right] \right)
}
and the proof is complete.
\end{proof}

Note that when we assume fixed proportional loss amounts with deterministic proportion coefficient $p$ instead of random $P$, the above result remains true with $q = \alpha p + \beta(1-p)$ instead of $Q$. Since $q$ is  deterministic, equation \eqref{pla-kappa} simplifies to
\formula{
\kappa = \frac{1}{\alpha + \beta} (1-(\mathcal{L}f_X)(q)).
}

Before we proceed to the next step of the proof, let us explain the idea for RPL. Recall that our aim is to calculate $\E^\Q \big[ S_\tau^{1-\nu} B(0,\tau) \ind_{\tau \leq T} \big]$, where, unlike for ILP and ILA, the process $S_t$ depends also $P$. That's why we condition this expected value with $P$ to get
\formula[pla-expectation1]{
\E^\Q \left[ \E^\Q \left[ S_\tau^{1-\nu} B(0,\tau) \ind_{\tau \leq T} \, |\, P \right] \right].
}
In this case, the trick with change of measure will be applied in the inner (conditional) expectation. We define a conditional catastrophic measure $\P_{C|P}(A)$ such that for any $\smash{A \in \hat{\mathcal{C}}_\infty}$ we have
\formula{
\P_{C|P}(A) = \P_C(A|P).
}
Similarly, let $\Q_{|P} = \P_{C|P} \otimes \Q_F$ be a corresponding product measure. We can now rewrite \eqref{pla-expectation1} as
\formula[pla-expectation2]{
\E^\Q \left[ \E^{\Q_{|P}} \left[ S_\tau^{1-\nu} B(0,\tau) \ind_{\tau \leq T}\right] \right]
}
and continue to work with the inner expectation according to a similar scheme as for ILP and ILA but with $\Q_{|P}$ instead of $\Q$.

Proceeding to the second step of the proof, we identify the distribution of the time of the trigger time $\tau$. The results are given in the following proposition.

\begin{proposition}\label{pla-tau}
    For random proportional loss process, under measure $\Q_{|P}$, trigger time $\tau$ has a distribution function $F_\tau$ and density $f_\tau$ given by
    \formula{
    F_\tau(t) &= 1- \sum_{n=0}^\infty \frac{\Lambda_t^n}{n!} \exp(-\Lambda_t) F_X^{n*}(D_P)  \\
    f_\tau(t) &= \lambda_t \exp(-\Lambda_t) \left( 1-\sum_{n=1}^\infty \left( \frac{\Lambda_t^{n-1}}{(n-1)!} - \frac{\Lambda_t^n}{n!} \right) F_X^{n*}(D_P) \right)
    }
    for $t \geq 0$, where $D_P = \min\left\{D_1/P, D_2/(1-P)\right\}$.
\end{proposition}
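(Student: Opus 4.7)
The plan is to exploit the crucial simplification provided by the PLA assumption: on the event that we condition on $P$, both aggregate loss processes $L_t^1$ and $L_t^2$ are deterministic multiples of the single process $L_t = \sum_{k=1}^{N_t} X_k$. Specifically, since $L_t^1 = P L_t$ and $L_t^2 = (1-P) L_t$ with $P \in (0,1)$, and since $L_t$ is non-decreasing, the two trigger times reduce under $\Q_{|P}$ to first passage times of the same compound Poisson process $L_t$ above the deterministic levels $D_1/P$ and $D_2/(1-P)$, respectively.

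First, I would write $\tau_1 = \inf\{t \geq 0 : L_t \geq D_1/P\}$ and $\tau_2 = \inf\{t \geq 0 : L_t \geq D_2/(1-P)\}$, and observe that monotonicity of $L_t$ gives $\tau = \min\{\tau_1, \tau_2\} = \inf\{t \geq 0 : L_t \geq D_P\}$ with $D_P = \min\{D_1/P,\, D_2/(1-P)\}$. Consequently,
\formula{
\Q_{|P}(\tau > t) = \Q_{|P}(L_t < D_P).
}
Since under $\P_{C|P}$ the variables $N_t$ and $X_k$ retain their original joint law (they are independent of $P$), $L_t$ is still a compound Poisson process with cumulative intensity $\Lambda_t$ and jump distribution $F_X$. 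Conditioning on $N_t$ and using the standard convolution identity then yields
\formula{
\Q_{|P}(L_t < D_P) = \sum_{n=0}^\infty \frac{\Lambda_t^n}{n!} \exp(-\Lambda_t)\, F_X^{n*}(D_P),
}
which gives the stated formula for $F_\tau(t) = 1 - \Q_{|P}(\tau > t)$.

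For the density, I would differentiate term-by-term in $t$. The $n=0$ term contributes $\lambda_t \exp(-\Lambda_t)$ after the appropriate sign, and for $n \geq 1$ the identity $\tfrac{d}{dt}\bigl(\tfrac{\Lambda_t^n}{n!}\exp(-\Lambda_t)\bigr) = \lambda_t \exp(-\Lambda_t)\bigl(\tfrac{\Lambda_t^{n-1}}{(n-1)!} - \tfrac{\Lambda_t^n}{n!}\bigr)$ reassembles the claimed expression for $f_\tau(t)$; termwise differentiation is justified by absolute and uniform convergence on compacts of the series and its derivative, since $F_X^{n*}(D_P) \leq 1$.

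The only conceptual step is the first one, namely recognising that the PLA structure collapses the two-dimensional first-passage problem into a one-dimensional one once $P$ is fixed; the rest is routine compound-Poisson bookkeeping. In particular, no change of measure or $\tau$-dependency argument is needed here, since Proposition \ref{pla-tau} concerns only the conditional law of $\tau$ under $\Q_{|P}$, and the more delicate change-of-measure machinery is deferred to the subsequent steps of the pricing proof.
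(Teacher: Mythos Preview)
Your proposal is correct and follows essentially the same approach as the paper's own proof: both reduce the two-dimensional trigger event $\{L_t^1 < D_1,\, L_t^2 < D_2\}$ to the one-dimensional event $\{L_t < D_P\}$ via the PLA identity $L_t^1 = P L_t$, $L_t^2 = (1-P)L_t$, then invoke the standard compound Poisson distribution and differentiate. Your explicit justification of termwise differentiation is a small bonus the paper omits.
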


\begin{proof}
    Let $t >0$. As before, we have
    \formula{
    \Q_{|P} (\tau > t) &= \P_C  (\tau > t\, |\,P) \\
    &= \P_C(\tau_1 > t, \, \tau_2 > t \, | \, P) \\
    &= \P_C\left( L_t^1 < D_1, \, L_t^2 < D_2 \, | \, P\right) \\
    &= \P_C\left( P L_t < D_1, \, (1-P) L_t < D_2 \, | \, P\right) \\
    &= \P_C\left( L_t < D_P \, |\, P\right).
    }
    where $L_t = \sum_{k=1}^{N_t} X_k$ and $D_P = \min\left\{D_1/P, D_2/(1-P)\right\}$. Recall that $N_t$ follows a Poisson distribution with cumulative intensity $\Lambda_t$ and $(X_k)$ are i.i.d. random variables, independent from $N_t$, with distribution function $F_X$. Hence,
    \formula{
    \P_C (L_t < D_P \, | \, P) &= \P_C(L_t = 0 \, | \, P) + \P_C(0 < L_t < D_P \, | \, P) \\
    &= \exp(-\Lambda_t) + \sum_{n=1}^\infty \frac{\Lambda_t^n}{n!} \exp(-\Lambda_t) F_X^{n*}(D_P),
    }
    Differentiating the above formula with respect to $t$ yields 
    \formula{
    \lambda_t \exp(-\Lambda_t) \left( \sum_{n=1}^\infty \left( \frac{\Lambda_t^{n-1}}{(n-1)!} - \frac{\Lambda_t^n}{n!} \right) F_X^{n*}(D_P) -1 \right) 
    }
    and the assertion follows.
\end{proof}

Note that this result is also useful for calculating the probabilities that appear in \eqref{ei1} or \eqref{ei3}. More precisely, for $t \geq 0$ we have
\formula{
\Q(\tau > t) &= \E^\Q \left[ \Q (\tau > t \, | \, P) \right] \\
&= \int_0^1 \left( \sum_{n=0}^\infty \frac{\Lambda_t^n}{n!} \exp(-\Lambda_t) F_X^{n*}(D_p) \right) F_P(dp)
}

The third step is to find a new conditional measure which we will denote as $\P^{\nu|P}$. Let
\formula[pla-eta]{
\frac{d \P^{\nu|P}}{d \P_{C|P}} \Bigg\vert_{\hat{\mathcal{C}}_t} = \exp(-(1-\nu) Q L_t + \Lambda_t (1- (\mathcal{L}f_X)((1-\nu)Q))) = \eta(t).
}
It is easy to verify that the process $\eta(t)$ is an exponential $\P^{\nu|P}$-martingale.

By Proposition~\ref{pla-tau}, we also find that that the process $L_t$ is a $\tau$-dependency under the conditional measure. As a fourth step, we prove that $L_t$ preserves its distribution with respect to the new measure, as it was in previous cases.

\begin{proposition}\label{pla-lt-distribution}
    The process $L_t$ under the measure $\P^{\nu|P}$ is a compound Poisson process with parameters
    \formula{
    & \Lambda^{\nu|P}_t = \Lambda_t (\mathcal{L}f_X)((1-\nu)Q),
    & F_X^{\nu|P}(x) = \frac{\exp(-(1-\nu)Qx)}{(\mathcal{L}f_X)((1-\nu)Q)} F_X(x).
    }
\end{proposition}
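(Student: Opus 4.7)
The plan is to mirror the Laplace-transform arguments used for Propositions~\ref{ilp-lt-distribution} and \ref{ila-lt-distribution}: compute $\E^{\P^{\nu|P}}e^{-zL_t}$ and identify it with the Laplace transform of a compound Poisson process. The crucial simplification is that everything is done \emph{conditionally on $P$}, so $Q = \alpha P + \beta(1-P)$ may be treated as a deterministic constant; this reduces the situation to the single-region Esscher-type transform of the standard compound Poisson $L_t = \sum_{k=1}^{N_t} X_k$ that has already been handled in the ILP/ILA proofs.

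Concretely, using the Radon--Nikodym derivative \eqref{pla-eta}, for $z \in \R$ one has
\[
\E^{\P^{\nu|P}} e^{-zL_t} = \E^{\P_{C|P}}\!\left[e^{-zL_t}\,\eta(t)\right] = e^{\Lambda_t\left(1 - (\mathcal{L}f_X)((1-\nu)Q)\right)}\,\E^{\P_{C|P}} e^{-(z+(1-\nu)Q)L_t}.
\]
Since under $\P_{C|P}$ the process $L_t$ is compound Poisson with cumulative intensity $\Lambda_t$ and jump density $f_X$, its Laplace transform equals $\exp\!\bigl(\Lambda_t((\mathcal{L}f_X)(z+(1-\nu)Q) - 1)\bigr)$. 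Combining and simplifying gives
\[
\E^{\P^{\nu|P}} e^{-zL_t} = \exp\!\left(\Lambda_t(\mathcal{L}f_X)((1-\nu)Q)\left(\frac{(\mathcal{L}f_X)(z+(1-\nu)Q)}{(\mathcal{L}f_X)((1-\nu)Q)} - 1\right)\right),
\]
which I would recognize as the Laplace transform of a compound Poisson process with cumulative intensity $\Lambda^{\nu|P}_t = \Lambda_t(\mathcal{L}f_X)((1-\nu)Q)$ and jump distribution whose Laplace transform is the ratio on the right. Writing that ratio as an integral, exactly as at the end of the proof of Proposition~\ref{ilp-lt-distribution}, identifies the tilted jump law as $F_X^{\nu|P}(dx) = e^{-(1-\nu)Qx}\,F_X(dx)/(\mathcal{L}f_X)((1-\nu)Q)$, which is the claimed form of $F_X^{\nu|P}$.

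The main obstacle is bookkeeping rather than ideas: one must keep $P$ (and hence $Q$) frozen throughout the conditional computation and check that $\eta(t)$ is indeed a $\P_{C|P}$-martingale. Both are immediate once one observes that, conditional on $P$, the process $L_t$ is an ordinary compound Poisson process with a deterministic exponential tilt parameter $(1-\nu)Q$, so the single-region exponential-martingale identities used in the ILP and ILA proofs apply verbatim. The independence of $P$ from $(N_t, X_k)$ ensures that $\P_{C|P}$ is a well-defined probability measure on $\hat{\mathcal{C}}_\infty$ under which this reasoning is legitimate.
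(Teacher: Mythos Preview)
Your proposal is correct and is essentially the same argument as the paper's, just written out in full: the paper's proof is the one-liner ``apply Proposition~\ref{ilp-lt-distribution} with $\alpha = Q$ and $\beta = 0$'', which amounts to exactly the Laplace-transform computation you spell out (conditioning on $P$ reduces to the single-region Esscher tilt of $L_t$).
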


\begin{proof}
    Our claim easily follows from Proposition~\ref{ilp-lt-distribution} with $\alpha = Q$ and $\beta = 0$.
\end{proof}

The final step is similar to the ILP and ILA case. We define a new measure $\overline{Q}_{|P} = \P^{\nu|P} \otimes \Q_F$ such that
\formula{
\frac{d\overline{\Q}}{d\Q} \bigg\vert_{\mathcal{G}_t} = \eta(t),
}
where $\eta(t)$ is defined by \eqref{pla-eta}. Then we examine the inner expectation value in~\eqref{pla-expectation2}, which yields
\formula{
\E^{\Q_{|P}} &\left[ S_\tau^{1-\nu} B(0,\tau) \ind_{\tau \leq T}\right] \\
&= \E^{\Q_{|P}} \bigg[ \exp\bigg( -(1-\nu)Q L_t - \nu \int_0^\tau r_u \, du +(1-\nu) \sigma_S W_\tau^2 \\ & \quad + (1-\nu) \left( \alpha \kappa_1 \Lambda_\tau^1 + \beta \kappa_2 \Lambda_\tau^2 - \tfrac{1}{2} \tau \sigma_S^2 \right) \bigg) \ind_{\tau \leq T} \bigg] \\
&= \E^{\overline{Q}_{|P}} \bigg[ \varphi (\tau, Q) \exp\bigg( - \nu \int_0^\tau r_u \, du +(1-\nu) \sigma_S W_\tau^2  \bigg) \ind_{\tau \leq T} \bigg],
}
where
\formula{
\varphi(t, q) = \exp \left(-\Lambda_t (1- (\mathcal{L}f_X)((1-\nu)q)) + (1-\nu) \left( (\alpha + \beta) \kappa \Lambda_t - \tfrac{1}{2} t \sigma_S^2 \right) \right).
}
Conditioning with $\tau$ gives us
\formula{
\int_0^T \varphi(t, Q) \, \E^{\Q_F} \left[ \exp\bigg( - \nu \int_0^\tau r_u \, du +(1-\nu) \sigma_S W_\tau^2  \bigg) \right] f_\tau^{\nu|P}(t) \, dt.
}
This expression is very similar with these obtained in the previous cases. Since no changes considering the financial measure were made and financial and catastrophic worlds are independent, we can write
\formula{
\E^{\Q_F} \left[ \exp\bigg( - \nu \int_0^\tau r_u \, du +(1-\nu) \sigma_S W_\tau^2  \bigg) \right] = \exp\left( \tfrac{1}{2}(1-\nu)^2 \sigma_S^2 t \right) P(r_0, t, \bar{\theta}_r, \bar{m}_r, \bar{\sigma}_r),
}
as before.

Regarding \eqref{pla-expectation2} and the above results, we obtain the final pricing formula:
\formula{
\int_0^1 \left( \int_0^T \varphi(t, q) \, \exp\left( \tfrac{1}{2}(1-\nu)^2 \sigma_S^2 t \right) P(r_0, t, \bar{\theta}_r, \bar{m}_r, \bar{\sigma}_r) \, f_\tau^{\nu|p}(t) \, dt \right) F_P(dp),
}

where $f_\tau^{\nu|p}$ follows the same formula to $f_\tau$ in Proposition~\ref{pla-tau} with $P=p$, and $\Lambda_t^\nu$ and $F_X^\nu$ instead of $\Lambda_t$ and $F_X$.

Note that if we consider fixed proportional loss amounts, that is when $P=p \in (0,1)$ with probability one, the above formula simplifies; for example the outer integral vanishes. We leave the details to the reader. \qed


\section{Further possible generalizations}
\label{sec:gen}

Regarding the results and methods of pricing gathered in this paper, one can generalise considered model for losses from three regions and more. It can be easily done for ILP and ILA. However, under the assumption of PLA, additional technical difficulties appear (they are primarily related to the proportion coefficients). In the following, we provide a result for the two former assumptions. 

Consider $R$ different regions, where $R=1,2,3,\ldots$, and $R$ aggregate loss processes defined as
\formula{
& L_t^r = \sum_{k=1}^{N_t^r} X_k^r & \text{for }r=1,2,\ldots, R, &
}
where $X_1^r, X_2^r, X_3^r, \ldots$ is a sequence of i.i.d. positive, continuous random variables with distribution functions $F_X^r$ and densities $f_X^r$, independent from the Poisson process $N_t^r$ with cumulative intensities $\smash{\Lambda_t^r = \int_0^t \lambda_u^r \, du}$.

Under ILP, the processes $L_t^r$ and $L_t^s$ are independent for different $r,s$ in the sense of pairwise independence of any two different random components therein. Under ILA, we assume that the processes $N_t^r$ are equal for every $r$ and we denote them by $N_t$ and their cumulative intensity as $\Lambda_t$.

For positive thresholds $D_1, D_2, \ldots, D_R$ we put
\formula{
& \tau_r = \min\{t \geq 0: L_t^r \geq D_r \}, & \text{for }r=1,2,\ldots, R&
}
and the trigger time for the multi-region CoCoCat bond is defined as
\formula{
\tau = \min\{\tau_r: r = 1,2,\ldots, R\}.
}

\begin{theorem}\label{thm-general}
    Risk-neutral price of multi-region CoCoCat for $R$ regions is given by
    \formula{
    \E^\Q I_1 + \E^\Q I_2 + \E^\Q I_3.
    }
    The first and third component are described by \eqref{ei1} and \eqref{ei3} while the second reads
    \formula{
    \E^\Q I_2 = \zeta Z S_0^{1-\nu} \int_0^\infty \exp\left(-\tfrac{1}{2} \nu (1-\nu)t \sigma_S^2\right) \Phi(t) P(r_0, t, \bar{\theta}_r, \bar{m}_r, \bar{\sigma}_r) F_\tau^\nu(dt),
    }
    where ${\bar{\theta}}_r, \bar{m}_r, \bar{\sigma}_r$ are described in Theorem \ref{thm-main}. Moreover,
    \formula{
    \Phi(t) = \begin{cases}
        \displaystyle{\prod_{r=1}^R \exp\bigg( -\Lambda_t^r\big(1-(\mathcal{L}f_X^r)(\alpha_r(1-\nu))) -  (1-\nu)(1-(\mathcal{L}f_X^r)(\alpha_r))\big) \bigg)} & \text{for ILP}, \\
        \displaystyle{\exp \left( -\Lambda_t (1-\nu) \left(1-\prod_{r=1}^R (\mathcal{L}f_X^r)(\alpha_r) \right) - \prod_{r=1}^R (\mathcal{L}f_X^r)(\alpha_r(1-\nu)) \right)} &\text{for ILA},
    \end{cases}
    }
    and
    \formula{
    F_\tau^\nu(t) = \begin{cases}
        \displaystyle{
        1 - \prod_{r=1}^R \left( \exp(-\Lambda_t^{\nu, r}) \sum_{n=0}^\infty \frac{(\Lambda_t^{\nu,r})^n}{n!} (F_X^{\nu,r})^{n*}(D_r) \right)
        } &\text{for ILP}, \\
        \displaystyle{
        1-\exp(-\Lambda_t^\nu) \sum_{n=0}^\infty \left( \frac{(\Lambda_t^\nu)^n}{n!} \prod_{r=1}^R (F_X^{\nu,r})^{n*}(D_r) \right)
        } &\text{for ILA},
    \end{cases}
    }
   where 
    \formula{
    & \Lambda_t^{\nu,r} = \Lambda_t^r (\mathcal{L}f_X^r)(\alpha_r(1-\nu)), & \Lambda^\nu_t = \Lambda_t \prod_{r=1}^R (\mathcal{L}f_X^r)(\alpha_r(1-\nu)) &
    }
    and
    \formula{
        F_X^{\nu, r} &= \frac{\exp(-\alpha_r(1-\nu)x)}{(\mathcal{L}f_X^r)(\alpha_r(1-\nu))} F_X^r(x),
    }
    for $r=1,2,\ldots,R$
\end{theorem}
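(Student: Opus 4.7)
The plan is to run the five-step change-of-measure scheme from Section~\ref{sec:proof}, generalising each step from two regions to $R$. Since the financial side of the model is unchanged, every step concerns only the catastrophic objects. For Step~1 (risk-neutral measure), ILP factorises as $\stc = \prod_{r=1}^R S_t^{\mathcal{C},r}$, so the martingale condition decouples into $R$ one-dimensional conditions yielding $\kappa_r = \alpha_r^{-1}(1-(\mathcal{L}f_X^r)(\alpha_r))$. Under ILA the processes share the common counting process $N_t$; conditioning on $N_t$ and using independence of $X_k^r$ across $r,k$, the MGF of $\sum_r \alpha_r X_k^r$ factorises into $\prod_r (\mathcal{L}f_X^r)(\alpha_r)$, yielding $\kappa = (\alpha_1+\cdots+\alpha_R)^{-1}(1-\prod_r(\mathcal{L}f_X^r)(\alpha_r))$.

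For Step~2, the event $\{\tau > t\}$ equals $\{L_t^r < D_r \text{ for all } r\}$. Under ILP the processes $L_t^r$ are mutually independent, so this factorises into a product of $R$ compound-Poisson tail probabilities, giving the displayed product structure for $F_\tau$. Under ILA, conditioning on $N_t=n$ makes the $L_t^r$ conditionally independent with $L_t^r \mid \{N_t=n\} \stackrel{d}{=} X_1^r+\cdots+X_n^r$, so $\P_C(L_t^1<D_1,\ldots,L_t^R<D_R \mid N_t=n)=\prod_r (F_X^r)^{n*}(D_r)$; averaging over the Poisson law of $N_t$ produces the single-sum formula with the product of convolutions.

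For Steps~3 and~4, in the ILP case take the Radon-Nikodym kernel as $\eta(t)=\prod_{r=1}^R \eta_r(t)$, a product of $R$ one-dimensional exponential martingales in the form of (47)--(49) of \cite{bgp}; mutual independence immediately forces each $L_t^r$ to remain compound Poisson under $\P^\nu$ with the parameters listed in the theorem. For ILA, take
\formula{
\frac{d\P^\nu}{d\P_C}\bigg\vert_{\hat{\mathcal{C}}_t} = \exp\bigg(-(1-\nu)\sum_{r=1}^R \alpha_r L_t^r + \Lambda_t\Bigl(1-\prod_{r=1}^R(\mathcal{L}f_X^r)(\alpha_r(1-\nu))\Bigr)\bigg),
}
and verify that each $L_t^r$ remains compound Poisson under $\P^\nu$ by computing $\E^{\P^\nu}\exp(-zL_t^r)$ exactly as in Proposition~\ref{ila-lt-distribution}: condition on $N_t$, use conditional independence across regions to factor the product, recognise the probability generating function of $N_t$ evaluated at an $R$-fold product of Laplace transforms, and read off the new cumulative intensity $\Lambda_t^\nu = \Lambda_t \prod_r(\mathcal{L}f_X^r)(\alpha_r(1-\nu))$ and the exponentially tilted loss law $F_X^{\nu,r}$ by comparing the resulting MGF to the standard compound-Poisson MGF.

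Step~5 is then routine: set $\overline{\Q}=\P^\nu\otimes\Q_F$ so that $d\overline{\Q}/d\Q\vert_{\mathcal{G}_t}=\eta(t)$, apply this at the stopping time $\tau$ to replace $\E^\Q[S_\tau^{1-\nu}B(0,\tau)\ind_{\tau\leq T}]$ by an expectation involving only the deterministic factor $\Phi(\tau)$, the financial exponential $\exp(-\nu\int_0^\tau r_u\,du+(1-\nu)\sigma_S W_\tau^2)$, and the indicator; condition on $\tau$ and invoke Lemma~2 of \cite{bgp} to extract the zero-coupon bond price $P(r_0,t,\bar{\theta}_r,\bar{m}_r,\bar{\sigma}_r)$. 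The main obstacle I expect is Step~4 for ILA: the kernel couples all $R$ loss processes jointly through the shared $N_t$, and one must carefully track how the $R$-fold product inside the probability generating function of $N_t$ combines with the normalising factor $1-\prod_r(\mathcal{L}f_X^r)(\alpha_r(1-\nu))$ so that, after cancellation, the marginal distribution of each individual $L_t^r$ comes out as a clean compound Poisson with the claimed tilted parameters.
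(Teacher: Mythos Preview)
Your proposal is correct and follows precisely the route the paper indicates: it states that the proof of Theorem~\ref{thm-general} ``can be easily carried out by repeating the reasoning presented in Sections~\ref{ilp-proof} and~\ref{ila-proof}'', and your five-step outline does exactly that, with the correct $R$-region analogues of the kernels, $\kappa$-values, and $\tau$-distributions. The only substantive computation you flag as a potential obstacle (Step~4 for ILA) is the direct $R$-fold analogue of Proposition~\ref{ila-lt-distribution}, and your sketch of it---condition on $N_t$, factor across regions, read off the pgf of $N_t$ at $\prod_r(\mathcal{L}f_X^r)(\alpha_r(1-\nu))$, then cancel against the compensator in $\eta(t)$---is exactly the argument that works.
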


The proof of Theorem \ref{thm-general} can be easily carried out by repeating the reasoning presented in Sections \ref{ilp-proof} and \ref{ila-proof}. Note that by putting $R=2$ we can retrieve results from Theorem \ref{thm-main} and for $R=1$ both cases (ILP and ILA) are equivalent and the main result from \cite{bgp} follows.

Another natural generalisation of this result would be redefining the trigger of CoCoCat bond. Let $r_0 = 1,2,\dots,R$ and assume that the CoCoCat bond mechanism is activated when exactly $r_0$ out of $R$ thresholds are exceeded. In pricing, this definition would involve the distribution of $r_0$th order statistic for $R$ random variables. Although it is mathematically doable, the authors are aware that such modification would not produce any neat formulae. The details are therefore left for the reader.

\section{Numerical illustration}
\label{sec:num}

In this section, we investigate the behaviour of the price of multi-region CoCoCat bonds through numerical experiments. We analyse the price as a function of the initial capitals $D_1$ and $D_2$, and the value of the parameter $\nu$ of the conversion price $K_P = S_0^{\nu}.$

The parameters of the models were calibrated using data from the Property Claims Services (PCS). The chosen sample consisted of losses caused by 44 wind, thunderstorm, and winter storm events that occurred at the same time in Oklahoma and Texas between 1985 and 2011. Loss amounts were adjusted using the consumer price index. 

In most of the events, the losses that occurred in Texas were higher than in Oklahoma. However, Oklahoma was not exempt from extreme losses and was heavily affected by tornadoes in May 1990 and May 2010, see Figure \ref{fig:ok_tx_losses_stem_boxplot}(a). The highest losses in Texas were also caused by tornadoes: in April 1994 and April 2011, see Figure \ref{fig:ok_tx_losses_stem_boxplot}(b). The box plots of loss amounts  presented in Figure \ref{fig:ok_tx_losses_stem_boxplot}(c) clearly show skewness in both samples and heavy tails of the losses.

\begin{figure}
    \centering
\includegraphics[width=0.9\linewidth]{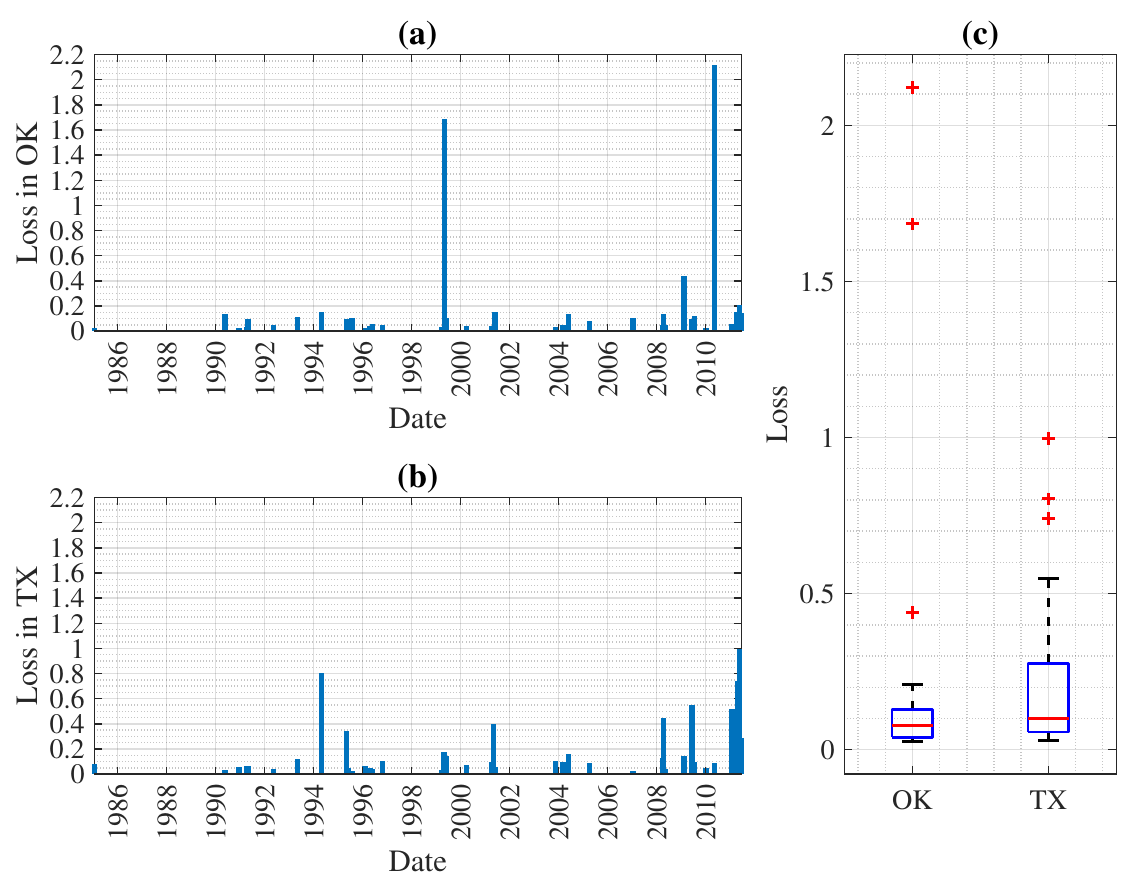}
    \caption{Adjusted losses occurred in (a) Oklahoma and (b) Texas, 
    in billion USD, together with (c) their box plots.}
    \label{fig:ok_tx_losses_stem_boxplot}
\end{figure}

Among log-normal, Pareto, gamma, Weibull, inverse Gaussian and generalised extreme value distributions, the former gave the smallest values of the Kolmogorov-Smirnov (KS), Cramer von Mises (CvM) and Anderson and Darling (AD) test statistics (for the description of the tests we refer to \cite{burjanwer11}). The log-normal distribution is given by the density:
\begin{equation}
    f(x) = \frac{1}{x\sigma\sqrt{2\pi}} \exp
    {\left( - \frac{\left(\ln(x)-\mu \right)^2}{2\sigma^2}\right)}, \quad x>0,\;\mu \in \mathbb{R},\;\sigma>0.
\end{equation}
In Table \ref{tab:logn_params} parameters of the distribution fitted to the losses in each state and the total loss from both states are presented.

\begin{table}[t]
\caption{Parameters of log-normal distribution fitted to data sample used in models with independent loss amounts (ILA) and proportional loss amounts (PLA).}
\label{tab:logn_params}
\centering
\begin{tabular}{ccc} \noalign{\smallskip}\hline\noalign{\smallskip}
Model & State      & Parameters of log-normal distribution\\ \noalign{\smallskip}\hline\noalign{\smallskip}
\multirow{2}{*}{ILA}        &  OK & $\mu=-4.564$, $\sigma =1.813$\\ \noalign{\smallskip}\cline{2-3}\noalign{\smallskip}
 & TX  & $\mu=-2.439$, $\sigma =1.183$    \\ \noalign{\smallskip}\hline\noalign{\smallskip}
 PLA  & total loss & $\mu=-1.477$, $\sigma = 0.902$ \\\hline
\end{tabular}
\end{table}

The timing of catastrophes was modelled by a homogeneous Poisson process with the annual intensity parameter $\lambda=1.4$. The value was estimated using the least squares method by comparing the mean value of the Poisson process $\mathbb{E}N(t) = \Lambda t$ with the aggregate number of events in our data set. Considering a small sample size, the homogeneous Poisson process gave a reasonable fit to the data with the following errors: MSE = 10.13, MAE = 2.53, MAPE = 30.6\%.

For illustration purposes, we consider a 5-year CoCoCat bond, which pays quarterly coupons at an annual rate $10\%$ plus a risk-free rate (like LIBOR). If the bond is triggered, $\zeta = 10\%$ of the nominal is converted into the sponsor's equity at price $S_{\tau}^{\nu}$, where $S_{\tau}$ is the price of the share at the trigger moment $\tau$ and the parameter $\nu \in (0,1).$ For simplicity, we assume that the nominal is $Z=1$.

The Longstaff's model with $r_0=0.02; \theta_r=0.2, \sigma_r = 0.03$ was used to model the interest rate process. The parameters of the stock price process were: $S_0=10, \sigma_s=0.2$. The correlation coefficient of the share and interest rate processes was set to $\rho=-0.5.$ 

\subsection{Results for ILA models}

\begin{figure}[ht]
    \centering
\includegraphics[width=0.9\linewidth]{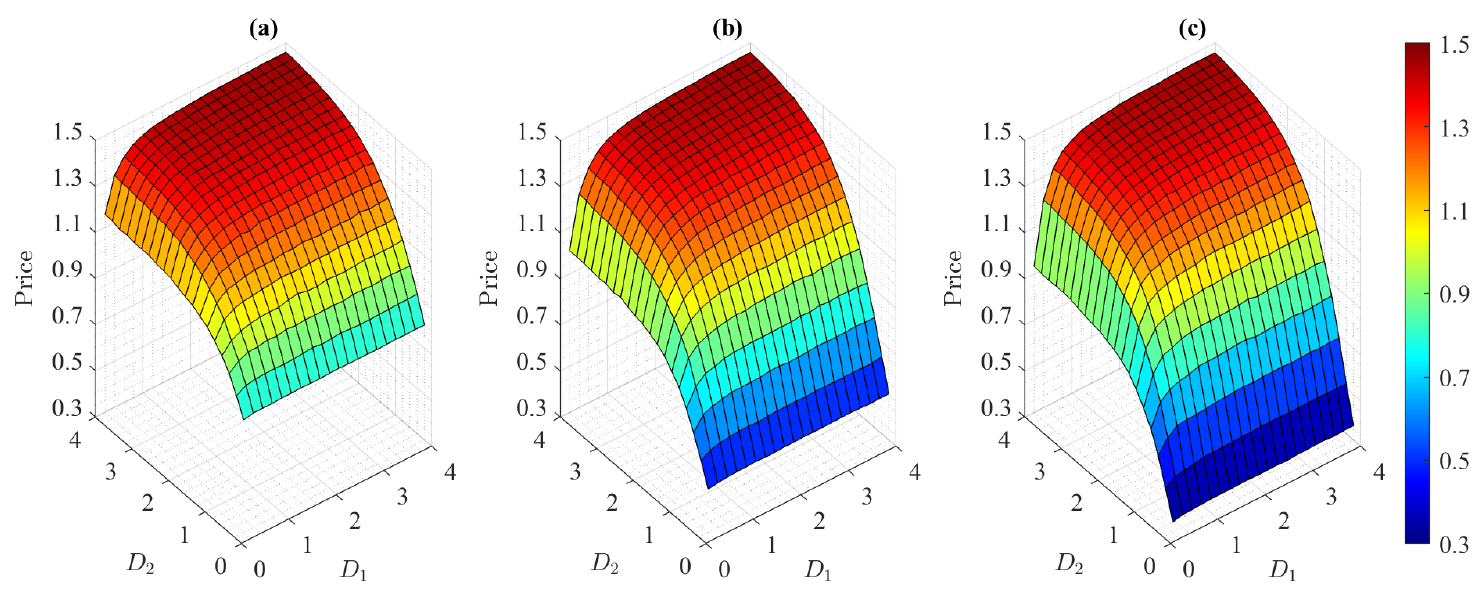}
    \caption{Prices obtained for the ILA model for (a) $\nu=0.2$, (b) $\nu=0.5$, (c) $\nu=0.8$, with respect to $D_1$ and $D_2$ (in billion USD).}    \label{fig:oktx_ILA_prices}
\includegraphics[width=0.9\linewidth]{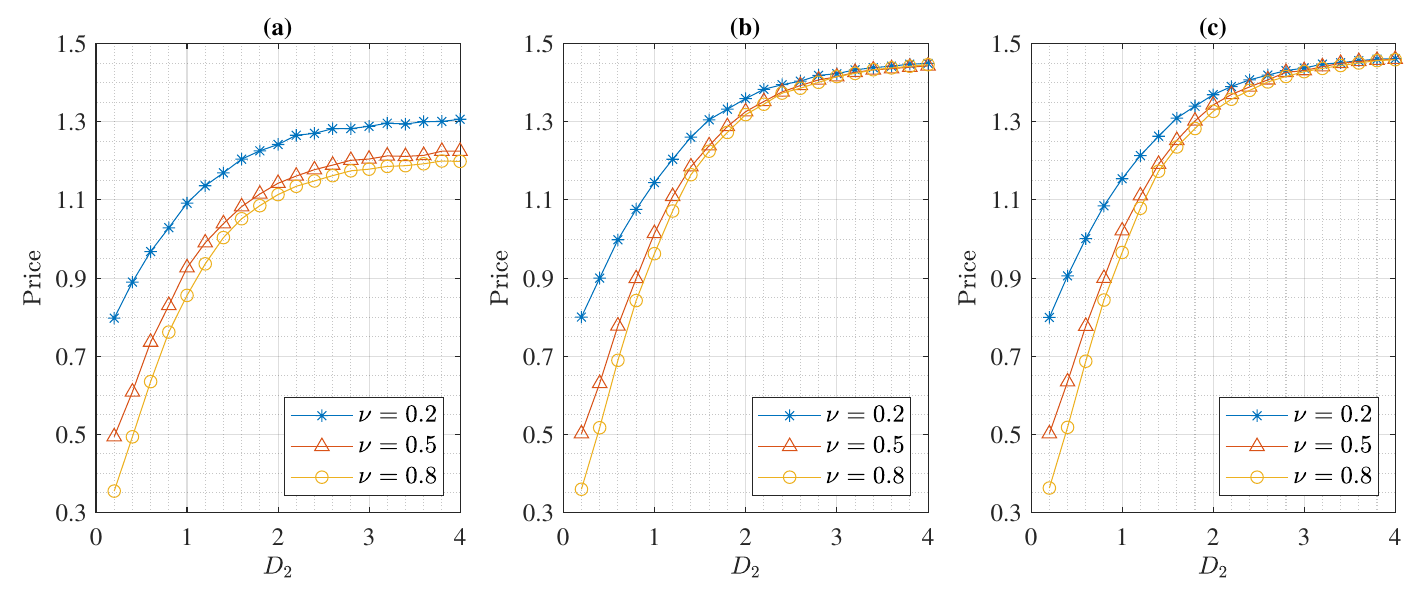}
    \caption{Prices obtained for the ILA model for different values of $\nu$ for (a) $D_1=0.4$, (b) $D_1=2$, (c) $D_1=4$ billion USD, with respect to $D_2$.}
\label{fig:oktx_ila_prices_compare}
\end{figure}

We begin with the analysis of the ILA model, where losses from each state are described as independent random variables. The impact of losses on stock prices of the bond's issuer is described by the parameters.
\begin{equation}
    \alpha = \frac{\delta}{\mathbb{E}[X_k^1]},\quad \beta = \frac{\delta}{\mathbb{E}[X_k^2]},
\end{equation}
where $\delta=0.02$, as in \cite{bgp}. 

In Figure \ref{fig:oktx_ILA_prices}, the price of the CoCoCat bond is presented as a function of $D_1$ and $D_2$ for the three chosen values of the parameter $\nu$. We can observe that as thresholds increase, so does the price of the bond. The higher $D_1$ and $D_2$, the lower the probability of losses exceeding given thresholds, resulting in the payment of all coupons and the nominal. We also observe that as $\nu$ increases, the price decreases, which is visible for small values of $D_2$. 
When the bond is triggered, $\zeta Z$ is converted into the equity at conversion price $S_{\tau}^{\nu}$. For higher values of $\nu$, the owner of the bond gets fewer units of the share since the conversion cost is higher. Consequently, the total value of the shares obtained at the moment of conversion is smaller, explaining the decrease in the price of the CoCoCat bond.
In Figure \ref{fig:oktx_ila_prices_compare} we can clearly see that the impact of the parameter $\nu$ is weakening as the thresholds $D_1$ and $D_2$ increase, since the probability of conversion of the nominal to equity decreases. 

\subsection{Results for PLA models} Next, we move on to models with proportional loss amounts. 
The total loss is divided according to the proportion $P$, that is, for a common loss $X_k$, the value of $P X_K$ is assigned to Oklahoma and $(1-P)X_k$ to Texas. 

For models with proportional loss amounts, the impact of losses on stock prices of the bond's issuer is described by the parameters
\begin{equation}
\alpha = \frac{\delta}{\mathbb{E}[PX_k]}, \quad \beta = \frac{\delta}{\mathbb{E}[X(1-P)]},
\end{equation}
where $\delta = 0.02$ as before.

The average proportion for historical data was found to be approximately equal to $0.38$ and the box plot of the proportions between losses is presented in Figure \ref{fig:ok_tx_proportion}(a). In the model with constant PLA (cPLA), we assume that the proportion $P$ is equal to $0.38$ with probability 1. We can interpret it as OK taking 38\% of the loss and TX taking 62\% of the total loss. 

The price of the CoCoCat bond as a function of thresholds is presented in Figure \ref{fig:oktx_constP_prices}. Similarly as for the ILA model, the higher $D_1$ and $D_2$ are, the higher the price. We can observe that the value of the price quickly converges to the maximum value. Again, an increase in $\nu$ causes a decrease in the price which is especially visible for smaller values of $D_1$ and $D_2$, see Figure \ref{fig:oktx_constP_prices_compare}.

\begin{figure}[ht]
    \centering
\includegraphics[width=0.9\linewidth]{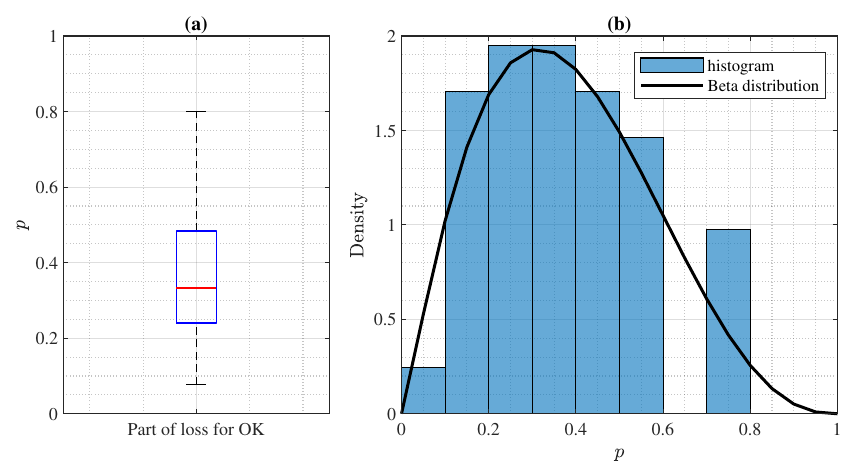}
    \caption{(a) Box plots for proportion between losses. (b) Normalized histogram of proportions compared with the density of fitted beta distribution.}
\label{fig:ok_tx_proportion}
\end{figure}

In the random PLA (rPLA) model, we describe the proportion between losses by the beta distribution with density:
\begin{equation}
    f(x) = \frac{x^{\alpha-1}\left(1-x\right)^{\beta -1}}{B\left(\alpha,\beta\right)},\quad x\in [0,1],\;\alpha>0,\; \beta>0,
\end{equation}
where $B(x)$ is the beta function. Using the maximum likelihood method, the parameters $\alpha=2.1531$, $\beta=3.5135$ were fitted. The expected value of this variable is equal to $0.38$, so it is the same as the constant proportion. The distribution fits the data well, see Figure \ref{fig:ok_tx_proportion}(b). Moreover, KS, CvM and AD tests did not reject the beta distribution hypothesis. 

\begin{figure}[ht]
\includegraphics[width=0.9\linewidth]{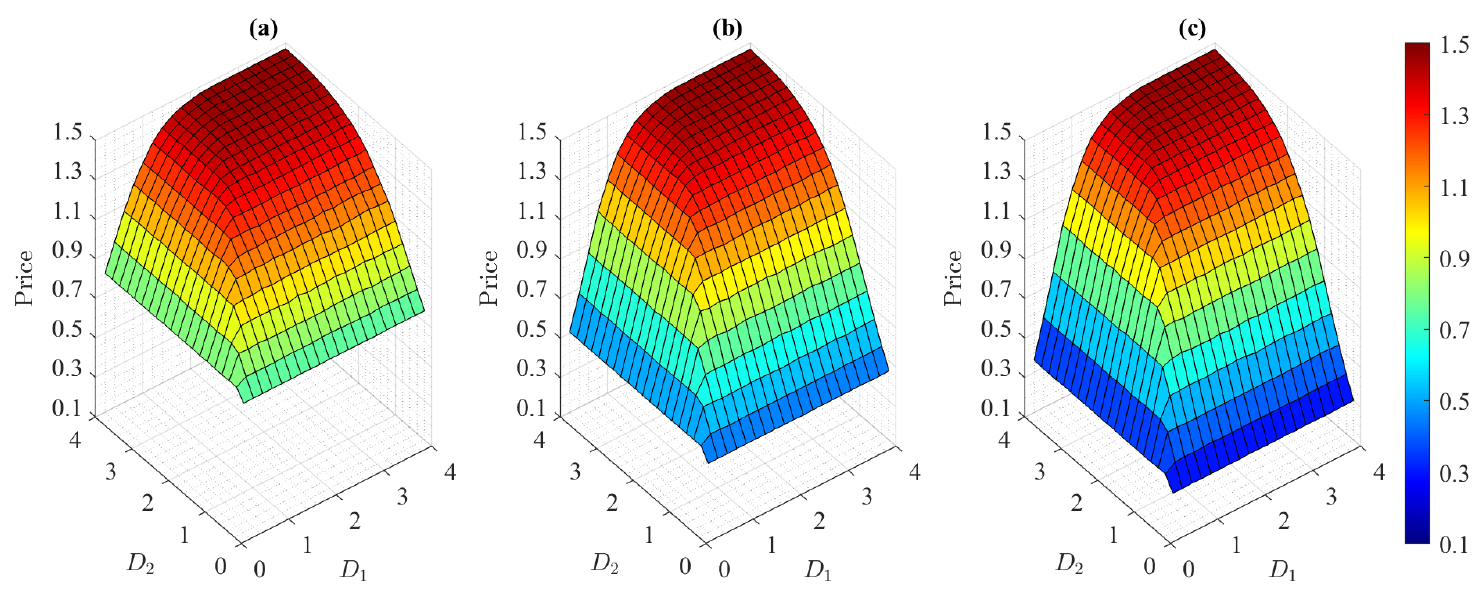}
    \caption{Prices obtained for the cPLA model for (a) $\nu=0.2$, (b) $\nu=0.5$, (a) $\nu=0.8$, with respect to $D_1$ and $D_2$ (in billion USD).}
\label{fig:oktx_constP_prices}    \includegraphics[width=0.8\linewidth]{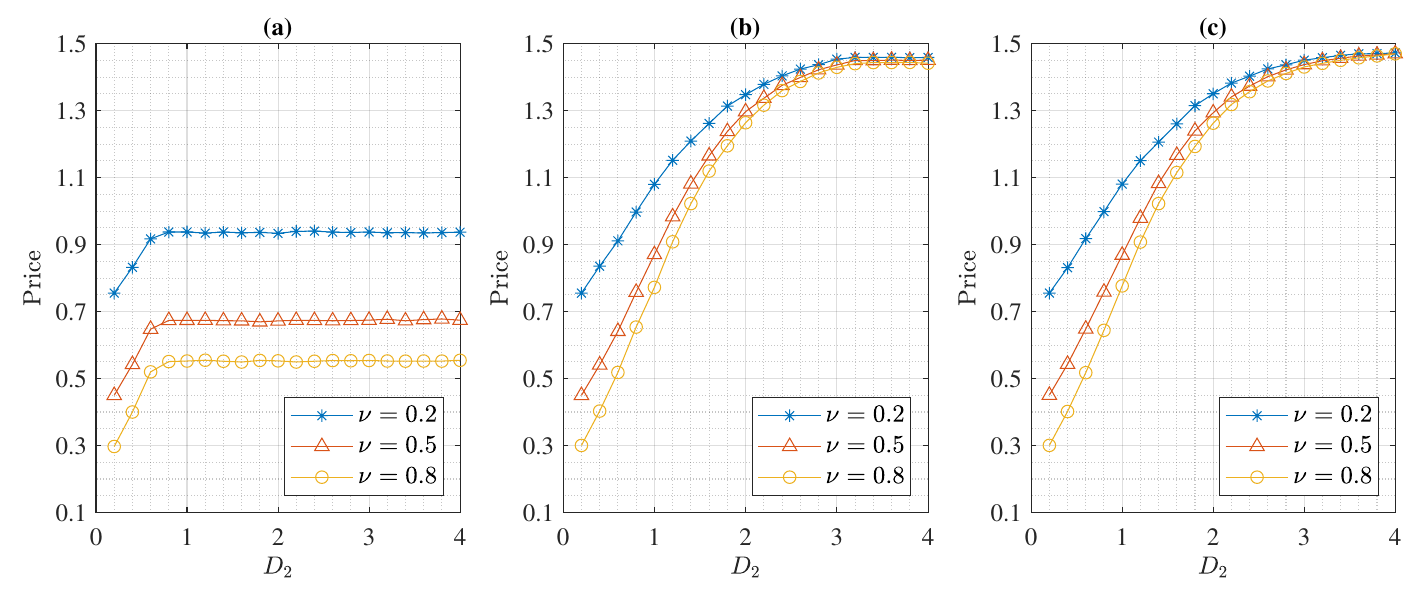}
    \caption{Prices obtained  for the cPLA model for different values of $\nu$ for (a) $D_1=0.4$, (b) $D_1=2$, (c) $D_1=4$ billion USD, with respect to $D_2$.}
\label{fig:oktx_constP_prices_compare}
\end{figure}

The prices of the CoCoCat bond for the rPLA model are presented in Figure \ref{fig:oktx_randomP_prices}. The prices behave similarly to the results for constant proportion, see Figure \ref{fig:oktx_constP_prices}, but the increase in the price as the thresholds grow is slower. The parameter $\nu$ also impacts the price, see Figure \ref{fig:oktx_randomP_prices}, especially for lower thresholds.

\subsection{Comparison of the models}

We take thresholds $D_1$ and $D_2$ as quantiles of order $q$ of the total loss. The prices naturally increase as we take higher quantiles of the distribution (since $D_1$ and $D_2$ are getting higher). We can also clearly observe the impact of $\nu$ on the prices -- the higher $\nu$, the lower the price.
 The effect is the most visible for smaller values of thresholds $D_1$ and $D_2$, when the possibility of conversion is much higher.




\begin{figure}
    \centering
    \includegraphics[width=0.9\linewidth]{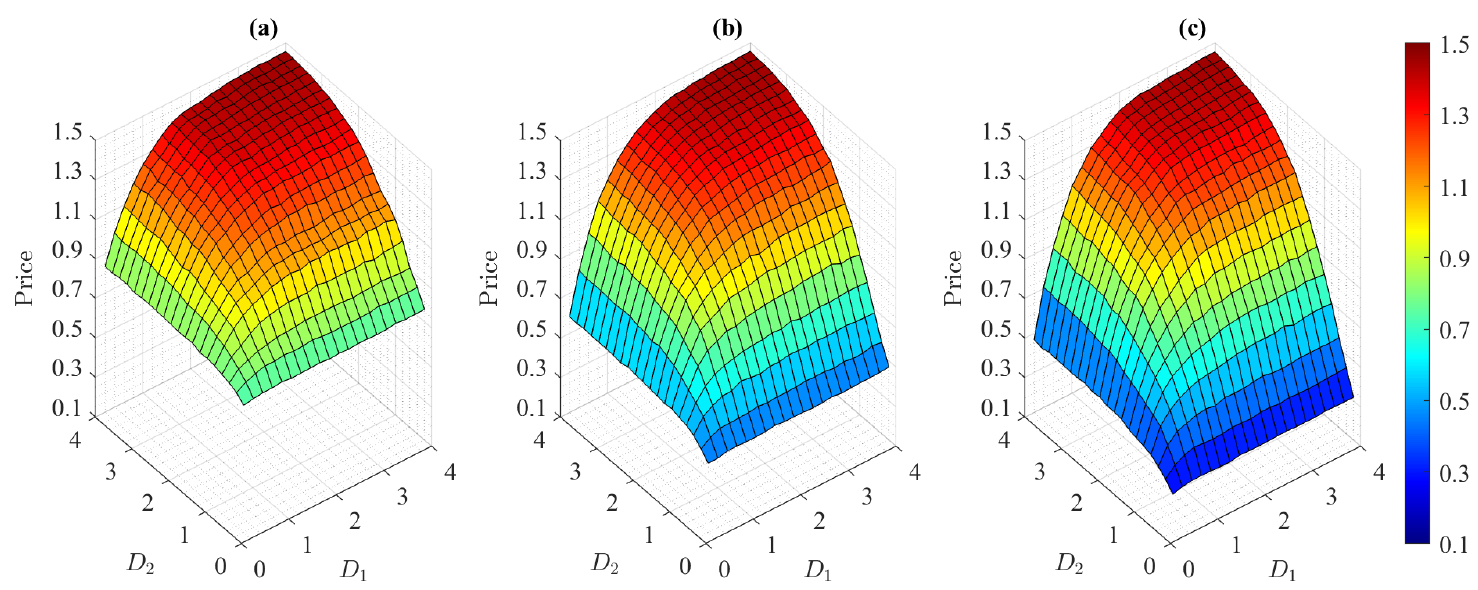}
    \caption{Prices obtained for the rPLA model for (a) $\nu=0.2$, (b) $\nu=0.5$, (a) $\nu=0.8$, with respect to $D_1$ and $D_2$ (in billion USD).}
    \label{fig:oktx_randomP_prices}
    \includegraphics[width=0.9\linewidth]{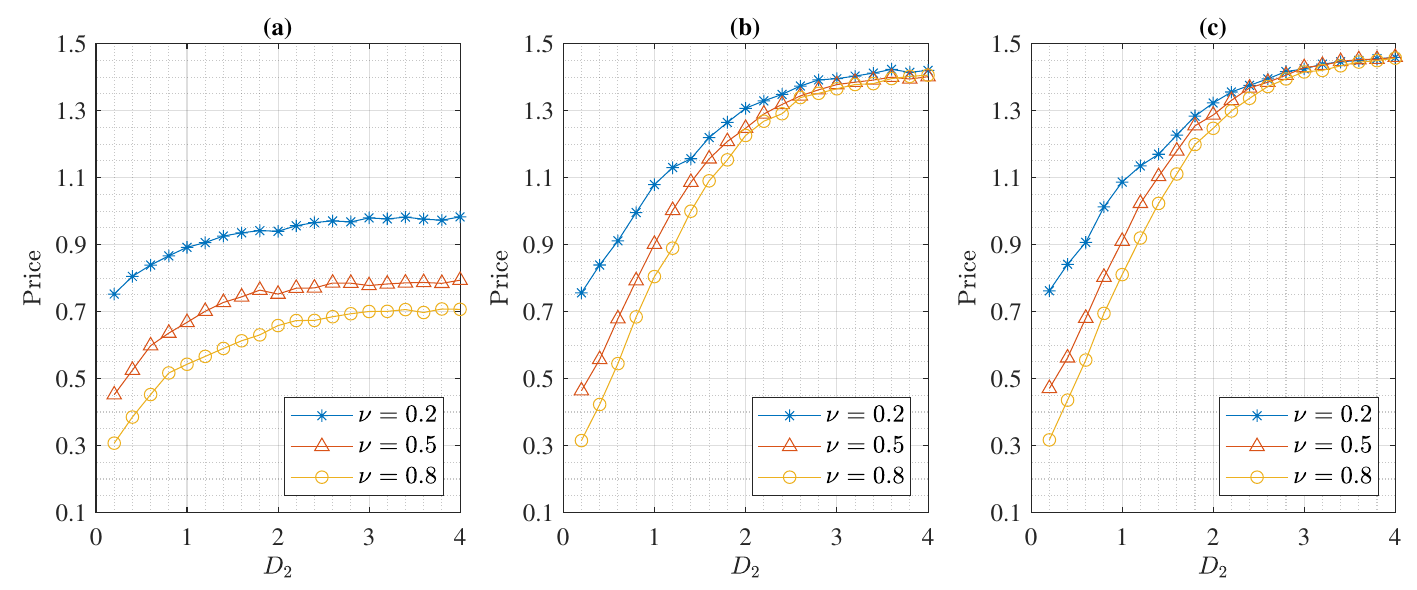}
    \caption{Prices obtained for the rPLA model for different values of $\nu$ for (a) $D_1=0.4$, (b) $D_1=2$, (c) $D_1=4$ billion USD, with respect to $D_2$.}
    \label{fig:oktx_randomP_prices_compare}
    
\end{figure}


\begin{figure}
    \centering
    \includegraphics[width=0.9\linewidth]{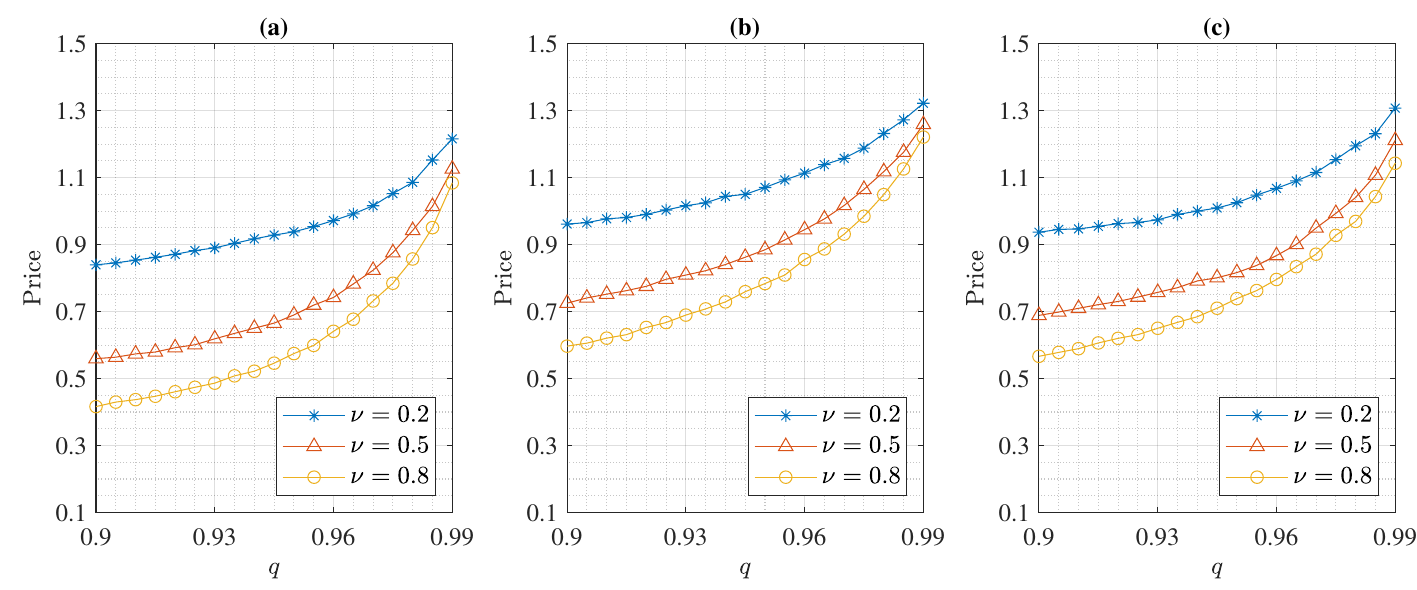}
    \caption{Prices obtained for the (a) ILA, (b) cPLA, (c) rPLA models, for $D_1$ and $D_2$ being $q$-quantiles of respective loss distributions, for different $\nu$, with respect to $q$.}
    \label{fig:oktx_quantileD}
\end{figure}


\section{Conclusions}
\label{sec:con}

Insurance-linked securities, in particular CAT bonds, is a dynamic and evolving market where multi-peril and multi-region structures play an increasingly significant role. These complex instruments represent a sophisticated convergence of insurance risk management and capital markets finance, driven by sponsors' needs for comprehensive coverage and large-scale capacity, and investors' search for yield and diversification.


A CoCoCat, introduced in \cite{bgp} can help stabilise their issuers' balance sheets in times of distress; particularly in times of extreme natural catastrophes potentially spurring on large non-independent insurance-related losses.
In this paper, we addressed a novel insurance-linked instrument which is in the form of a contingent convertible bond (CoCoCat bond) with a trigger linked to the occurrence of predefined natural catastrophes in different regions. To this end, we constructed a pricing model that incorporated different dependence scenarios between regions.


First, we introduced a general 2D model. This comprehensive model is designed to capture both financial market risk and catastrophe risk variables. The financial market component is modelled using classical Black-Scholes dynamics, incorporating a stochastic interest-rate process, for which Longstaff's model is specifically chosen. The catastrophe risk component was addressed by modelling a 2D aggregate loss process, which can be represented by various compound Poisson processes. 

Then, we proposed three specialised cases of modelling the aggregate loss processes, which are crucial for the subsequent pricing formulae, namely
\begin{enumerate}
    \item ILP: This model assumes that losses originate from two distinct regions and that these losses occur with different frequencies. Crucially, under ILP, the variables representing the number of losses and the loss amounts in the two regions are considered pairwise independent, leading to independent aggregate loss processes for each region.
    \item ILA: In this scenario, it is assumed that losses in the two different regions occur simultaneously, meaning that they share the same loss frequency. However, the amounts of these losses ($X_k^1$ and $X_k^2$) are independent of each other.
    \item PLA: This model also assumes that losses occur at the same time. The distinguishing feature here is that the loss amounts themselves are split proportionally between the two regions. Two sub-types were considered:
    \begin{itemize}
        \item cPLA: The proportion of loss allocated to each region is fixed and deterministic.
        \item rPLA: The proportion of loss is a random variable, independent of the loss amounts and their frequency. For this case, we considered the beta distribution.
    \end{itemize}
\end{enumerate}

Next, for all considered cases, we derived the risk-neutral pricing formulas using change-of-measure techniques.  We were able to find intuitive and analytical expressions for the prices. An exponential
change of measure allowed us to separately deal with financial markets as well as catastrophe-risk variables, and a Girsanov-like transformation allowed us to synthetically remove a Brownian motion from the expectation containing two correlated Brownian motions.

 We arrived at an analytical expression for the conversion feature (and hence the price) which only required simulation of the loss process in order to empirically estimate the distribution of the time-of-trigger of the equity conversion feature. We note that Monte Carlo simulation could be used to estimate the value of the conversion feature of the CoCoCat directly. However, our simplification to an analytical formula has more in its favour, since only one process had to be simulated. 

We also fitted the model to the natural catastrophe data provided by Property Claim Services and investigated the influence of the dependence on the bond prices. We presented numerical experiments as a first foray into the price behaviour of the CoCoCat. Gaining an understanding of the IL CoCoCat price behaviour, for varying parameters, is crucial in the design stage of such an instrument. 

The prices we obtained in our analyses were in accordance with intuition: the higher the threshold level of the IL CoCoCat, the greater the price. 
We also found that the conversion fraction significantly impacts the value of the conversion feature; namely, the lower the conversion factor, the higher the price. Finally, we discovered that the prices under different dependence scenarios vary. For the choices of parameters considered, the independence scenario yielded the lowest prices. This justifies the need to choose the right approach. Finally, we note that other dependence scenarios can also be analysed, but a key advantage of the introduced models is their ability to produce simple pricing formulas with parameters that can be readily estimated from loss data. 

\section*{CRediT authorship contribution statement}

\textbf{Jacek Wszoła:} Writing – original draft, Formal analysis, Methodology, Validation, Proofs. \textbf{Krzysztof Burnecki:} Writing – review \& editing, Conceptualization, Methodology, Validation, Supervision. \textbf{Marek Teuerle:} Writing – review \& editing, Conceptualization, Methodology, Validation, Supervision. \textbf{Martyna Zdeb:} Writing – Original Draft, Software, Visualization.

\bibliographystyle{abbrv}
\bibliography{main}

\end{document}